\documentclass[conference,compsoc]{IEEEtran}

\usepackage[T1]{fontenc}
\usepackage[utf8]{inputenc}
\usepackage{cite}          
\usepackage{graphicx}
\usepackage{tikz}
\usetikzlibrary{positioning,decorations.pathreplacing,arrows.meta}
\usepackage{pifont}       
\usepackage{booktabs}
\usepackage{multirow}
\usepackage{xcolor}
\usepackage{amsmath,amssymb,amsthm}
\usepackage{mathtools}
\usepackage{algorithm}
\usepackage{algorithmic}
\usepackage{microtype}
\usepackage{xspace}
\usepackage{url}
\usepackage{hyperref}
\hypersetup{
	pdfauthor={Homayoun Maleki, Nekane Sainz, Jon Legarda, Igor Santos-Grueiro},
	pdftitle={Sustained Participation as a Security Resource: The Bounded Participation Channel},
	pdfsubject={Computer Science - Cryptography and Security},
	pdfkeywords={Sybil resistance, proof of personhood, participation verification},
	colorlinks=true,
	linkcolor=blue!50!black,
	citecolor=blue!50!black,
	urlcolor=blue!50!black
}

\newtheorem{theorem}{Theorem}
\newtheorem{lemma}{Lemma}
\newtheorem{corollary}{Corollary}
\newtheorem{definition}{Definition}
\newtheorem{assumption}{Assumption}
\newtheorem{proposition}{Proposition}
\theoremstyle{remark}
\newtheorem{remark}{Remark}
\theoremstyle{plain}

\newcommand{\BPC}{BPC\xspace}

\newcommand{\blfootnote}[1]{%
	\begingroup
	\renewcommand{\thefootnote}{}\footnote{#1}%
	\addtocounter{footnote}{-1}%
	\endgroup
}

\title{Sustained Participation as a Security Resource:\\
	The Bounded Participation Channel}

\author{
	\IEEEauthorblockN{Homayoun Maleki}
	\IEEEauthorblockA{\footnotesize DeustoTech, University of Deusto\\
		Bilbao, Spain\\
		h.maleki@deusto.es}
	\and
	\IEEEauthorblockN{Nekane Sainz}
	\IEEEauthorblockA{\footnotesize DeustoTech, University of Deusto\\
		Bilbao, Spain\\
		nekane.sainz@deusto.es}
	\and
	\IEEEauthorblockN{Jon Legarda}
	\IEEEauthorblockA{\footnotesize DeustoTech, University of Deusto\\
		Bilbao, Spain\\
		jlegarda@deusto.es}
	\and
	\IEEEauthorblockN{Igor Santos-Grueiro}
	\IEEEauthorblockA{\footnotesize International University of La Rioja (UNIR)\\
		Logro\~{n}o, Spain\\
		igor.santosgrueiro@unir.net}
}

\begin{document}
	
	\maketitle
	\blfootnote{A substantially different preliminary version of this work,
		under the earlier design name Human Challenge Oracle, was
		previously posted as a non-peer-reviewed preprint by a subset of
		the authors~\cite{maleki2026hco}.}

	\begin{abstract}
		
		Can sustained, per-identity participation be engineered into a
		security resource? Doing so requires verifying it---window by
		window---so that no adversary can amortize it. Most anti-Sybil
		defenses answer a different question: they price identity
		\emph{creation}, not identity \emph{survival}. Once admitted, an adversary
		sustains thousands of accounts for free---CAPTCHAs and Proof-of-Personhood
		verify only at the door, and resource-based defenses such as compute or
		capital amortize across identities instead of pricing them individually.
		
		We introduce the \emph{Bounded Participation Channel} (\BPC), a formal primitive
		that closes this gap. \BPC\ repeatedly issues fresh, identity-bound
		challenges under a strict deadline, enforced by four structural
		properties---identity binding, freshness, real-time response, and bounded
		per-channel throughput---that together yield a provable cost theorem:
		sustaining $s$ identities over $T$ windows costs
		$C(s,T) \geq sT/\tau_h$. The guarantee is solver-agnostic, holding
		regardless of whether a channel is operated by a human, an AI system, or
		a hybrid.
		
		We give a hash-based construction with a publicly verifiable procedure
		under standard cryptographic assumptions, characterize four admissible
		challenge families, and evaluate two against three frontier
		models---GPT-4o, Gemini~2.5~Flash, and Claude~Sonnet~4.5---across 600
		trials. Despite near-perfect accuracy (97--100\%), every model remains
		throughput-bounded, exhibiting single-digit $\tau_h$ from either latency
		or correctness limits. The finding is simple but consequential:
		\emph{solvability does not imply unlimited throughput}---so
		sustained participation can be verified, window by window, and
		thereby engineered into a security resource with a provable,
		linear cost floor, independent of who or what supplies it: unlike
		resources that can be pooled, stockpiled, or transferred, this one
		must be re-earned by every identity, in every window, for as long
		as the system relies on it. Verified this way, sustained
		participation becomes a security resource any system can rely on,
		wherever the real question is not who joined, but who is still
		showing up.
		
	\end{abstract}
	\begin{IEEEkeywords}
		Sybil resistance,
		continuous participation verification,
		identity amplification,
		bounded participation channels,
		cost-based security,
		resource asymmetry,
		throughput-bounded participation,
		authentication,
		open systems security
	\end{IEEEkeywords}

	\section{Introduction}
	\label{sec:intro}
	
	Open systems increasingly depend on large populations of
	pseudonymous participants. Online communities, collaborative
	platforms, governance mechanisms, gaming ecosystems, social
	platforms, blockchain-based systems, and anti-abuse infrastructures
	must all confront the same fundamental problem: a single adversary can
	cheaply create and coordinate many identities. The challenge,
	however, is not merely preventing identity creation. It is ensuring
	that maintaining many active identities incurs cost that grows
	proportionally with both the number of identities and the duration
	for which they remain active.
	
	\noindent\textbf{This problem is intensifying, not receding.}
	Automated programs generated a record 53\% of global web traffic in
	2025, with AI-driven bot attacks up more than twelvefold year over
	year~\cite{thales2026badbot}. As autonomous AI agents become
	legitimate participants in open systems---not just adversaries
	impersonating humans---the operative question shifts. It is no
	longer only ``is this participant a human or a bot?'' but
	increasingly ``is this one independent agent, or one process posing
	as a thousand?'' A mechanism answering the second question must be
	agnostic to what operates a given identity---a person, a script, or
	a language model---by design, not as an afterthought bolted onto a
	human-detection scheme. This is the setting the rest of the paper
	targets.
	
	This distinction is fundamental, and most existing defenses get it
	backwards. They treat Sybil resistance as an admission problem:
	verify an identity once, then assume continued participation
	remains meaningful. But an adversary does not need cheap
	identities---it needs identities that stay cheap to keep alive.
	What open systems actually require is a cost structure in which
	maintaining $s$ active identities over $T$ time windows necessarily
	incurs cost
	\[
	C(s,T)=\Omega(sT),
	\]
	growing jointly with both identity count and participation horizon.
	
	\noindent\textbf{Why existing approaches fall short.}
	CAPTCHAs~\cite{vonahn2003captcha} exploit perceptual asymmetries at
	admission time but are fundamentally one-shot: once an account is
	created, continued participation incurs no proportional recurring
	cost. Proof-of-Personhood systems~\cite{borge2017proofofpersonhood,%
		ford2020pseudonym} raise the cost of the ceremony but provide no
	mechanism for continuously pricing participation afterward.
	Social-graph and behavioral
	defenses~\cite{yu2006sybilguard,yu2008sybillimit,danezis2009sybilinfer,%
		wang2013clickstream,frank2013touchalytics,serwadda2013kids} address
	related aspects of abuse but establish no explicit lower bound on
	the cost of maintaining many identities over time. The limitation
	is structural: none of these approaches ties continued
	participation to a resource whose recurring cost is guaranteed
	rather than merely possible. The relevant question is therefore not
	merely whether a resource is scarce or reusable, but whether
	continued participation requires fresh resource expenditure in each
	time window~\cite{maleki2026geometry}. Sustained, per-identity
	participation is a promising candidate precisely because---unlike
	compute or capital---it cannot be pooled or transferred between
	identities~\cite{maleki2026geometry}. This raises the question this
	paper answers:
	
	\begin{quote}
		\emph{Can sustained, per-identity participation be engineered
			into a security resource---one that resists amortization across
			identities, across time windows, and across whatever operates
			the channel?}
	\end{quote}
	
	This paper answers it in general form, independent of
	any one downstream use. We introduce the \emph{Bounded
		Participation Channel} (\BPC), a formal primitive for sustained
	participation verification that any such downstream mechanism---consensus,
	governance, reputation, or otherwise---can build on. Rather than
	verifying participants only at admission time, \BPC\ repeatedly
	requires each active identity to demonstrate fresh participation
	through time-local challenges, cryptographically bound to both
	identity and time window, that must be answered within a strict
	response deadline.
	
	\noindent\textbf{Continuous verification, not accumulated
		reputation.}
	\BPC\ is continuous in a precise sense: it produces, for every
	identity $v$ and window $d$, a verifiable participation bit
	$\pi(v,d) \in \{0,1\}$, where $\pi(v,d)=1$ iff $v$ performed the
	qualifying action for window $d$. Sustained participation over $T$ windows is
	the sequence $\pi(v,1),\ldots,\pi(v,T)$. Crucially, \BPC\ does not
	accumulate these bits into a score, reputation, or any other
	persistent state---each $\pi(v,d)$ is independently re-established,
	and nothing about window $d$ affects the cost of window $d+1$.
	\BPC\ exports a sequence of time-indexed participation proofs; it is
	deliberately silent on what higher-level systems do with them. This
	separation is what keeps \BPC\ a primitive rather than a policy:
	gaming platforms may use $\pi(v,\cdot)$ for AFK detection, governance
	systems for voting eligibility, and reputation or consensus systems
	for scoring participation history---\BPC\ remains agnostic to how
	any of these interpret its output.
	
	The security of \BPC\ rests on four structural properties: identity
	binding, freshness, real-time response, and bounded per-channel
	throughput. Together, they rule out every form of
	amortization---no solution can be shared across identities, replayed
	across windows, or produced faster than a channel's physical
	throughput allows---and induce a provable cost theorem:
	\[
	C(s,T)\geq \frac{sT}{\tau_h},
	\]
	where $\tau_h$ denotes the maximum number of valid responses any
	single channel can produce per window. The guarantee holds
	regardless of solver identity: it depends only on the physical fact
	that no channel can serve unbounded throughput within a deadline,
	not on humans outperforming machines.
	
	To evaluate this claim, we instantiate \BPC\ across four challenge
	families and experimentally evaluate two of them against three
	independently developed frontier AI systems: GPT-4o,
	Gemini~2.5~Flash, and Claude~Sonnet~4.5, across 600 trials. Despite
	near-perfect success rates on perceptual tasks, all models remain
	throughput-bounded, yielding small single-digit values of $\tau_h$:
	\emph{solvability does not imply unlimited throughput}.
	
	\noindent\textbf{Contributions.}
	\begin{itemize}
		\item A formal primitive, \BPC, specified abstractly as an
		Event/Claim/Proof pattern with two operations,
		$\textsf{Prove}$ and $\textsf{Verify}$
		(Section~\ref{sec:spec}), separating what is guaranteed from how
		it is realized.
		\item A layered security definition that distinguishes
		objective-level properties (H1--H2, each mapped to a security
		game), a primitive-level qualitative timing requirement (H3),
		and a separate channel-capacity assumption (H4) needed only for
		the multi-identity cost theorem---so that no single proof's
		validity depends on an assumption about channel throughput.
		\item A cost theorem showing that, under Assumption~\ref{asm:throughput},
		these properties jointly force $C(s,T) = \Omega(sT)$, with
		participation exported as a sequence of time-indexed proofs
		rather than an accumulated score.
		\item A hash-based realization of $\textsf{Prove}$ and
		$\textsf{Verify}$ via an interactive challenge-response exchange,
		with a publicly verifiable procedure under standard cryptographic
		assumptions, and four admissible challenge families with
		browser-based instantiations.
		\item A 600-trial evaluation against three frontier AI systems
		showing that near-perfect accuracy does not defeat the bound.
	\end{itemize}
	
	\section{Related Work}
	\label{sec:related}
	
	\noindent\textbf{Sybil attacks and graph-based defenses.}
	Douceur~\cite{douceur2002sybil} established the fundamental problem:
	in open systems without a central authority, identity replication
	cannot be prevented without binding participation to a scarce resource.
	Graph-based defenses---SybilGuard~\cite{yu2006sybilguard},
	SybilLimit~\cite{yu2008sybillimit},
	SybilInfer~\cite{danezis2009sybilinfer},
	SybilRank~\cite{cao2012syrank}---exploit social network structure to
	bound the number of adversarial identities a community-trust graph can
	tolerate. Clickstream analysis~\cite{wang2013clickstream} and network
	topology~\cite{viswanath2010analysis} provide complementary signals.
	These mechanisms are valuable but orthogonal to \BPC: they constrain
	how many Sybil identities exist, not how much it costs to \emph{sustain}
	each one over time. They provide no per-identity per-window cost bound;
	an adversary that passes the graph filter at admission can sustain many
	identities indefinitely at negligible recurring cost.
	
	\noindent\textbf{CAPTCHA and one-time verification.}
	CAPTCHAs~\cite{vonahn2003captcha} exploit the human--machine perceptual
	gap at account creation. reCAPTCHA~\cite{recaptcha2019} and
	hCaptcha~\cite{hcaptcha2020} add behavioral risk signals.
	These defenses share a critical structural weakness: they are
	\emph{one-shot}. Once an account is created, sustaining it incurs zero
	marginal cost; the adversary is never asked to prove renewed
	participation. This is not an engineering failure---it is a
	definitional one. CAPTCHA does not attempt to bound $C(s,T)$; it
	bounds $C(s,1)$ at creation time only. Meanwhile, the perceptual gap
	CAPTCHAs rely on is eroding: deep learning breaks text
	CAPTCHAs~\cite{bursztein2011text,ye2018yet} and semantic image
	challenges~\cite{sivakorn2016iam,goodfellow2014multidigit} at high
	rates, and solving services commoditize the
	bypass~\cite{motoyama2010recaptchas}. Token-binding
	extensions~\cite{popov2019token} attempt to tie authentication tokens
	to specific devices but do not provide fresh, per-window,
	per-identity challenges and offer no composable cost theorem.
	
	\noindent\textbf{Proof-of-Personhood.}
	PoP systems~\cite{borge2017proofofpersonhood,ford2020pseudonym} enforce
	a one-person--one-identity mapping through ceremonies, biometrics, or
	social vouching. Deployed systems include
	Worldcoin~\cite{worldcoin2023} (iris scanning),
	BrightID~\cite{brightid2022} (social verification), and
	Gitcoin Passport~\cite{gitcoin2023} (credential aggregation).
	Zk-proof-based approaches~\cite{semaphore2023,unirep2023} provide
	privacy-preserving personhood attestations without revealing identity.
	PoP raises the cost of \emph{creating} a verified identity but does
	not constrain ongoing participation: once admitted, an adversary
	sustaining $s$ identities faces no proportional recurring cost.
	The adversary simply acquires $s$ PoP credentials once---possibly
	through credential markets~\cite{golle2004economical}---and sustains
	them indefinitely. PoP achieves $C(s,1) = \Omega(s)$ at creation
	time; \BPC\ targets $C(s,T) = \Omega(sT)$ across the full
	participation lifetime.
	
	\noindent\textbf{Continuous and behavioral authentication.}
	Behavioral biometrics use ongoing interaction signals---
	touchscreen dynamics~\cite{frank2013touchalytics}, keystroke
	timing~\cite{serwadda2013kids}, mouse
	movement~\cite{shen2013continuous}---as persistent authentication
	signals. Risk-based authentication systems~\cite{freeman2016onion}
	combine behavioral and contextual signals to detect account takeover.
	These mechanisms come closest to \BPC\ in spirit---continuous
	rather than one-shot---but they stop short in two ways. They are
	passive: the system infers presence from behavior rather than
	issuing identity-specific, time-bound challenges, so a sophisticated
	adversary can replay or synthesize signals without proportional
	effort per identity. And they produce no exportable, verifiable
	record: there is no analogue of \BPC's per-window participation
	proof $\pi(v,d)$, only an internal risk score computed and consumed
	by a single proprietary provider. \BPC\ instead produces a public,
	cryptographically verifiable proof for every window, usable by any
	downstream system in an open, adversarial setting.
	
	\noindent\textbf{Trusted hardware and rate-bounded attestation.}
	Trusted Execution Environments (TEEs)---Intel
	SGX~\cite{costan2016sgx}, ARM TrustZone---bind computation to
	attested hardware, enabling per-device rate enforcement that is
	robust against software-level attacks. TPM-based attestation and
	secure enclaves can enforce~(H4) by construction, since scaling to
	$s$ identities requires $s$ independently attested devices.
	Privacy Pass~\cite{davidson2023privacypass} uses blind signatures to
	decouple rate-bounded token issuance from identity, enabling anonymous
	but rate-limited participation tokens. These approaches are
	complementary to \BPC: trusted hardware is one concrete instantiation
	of the channel serialization basis for~(H4)
	(Section~\ref{sec:spec}). However, \BPC\ does not
	\emph{require} trusted hardware---it is designed to function with
	software-only channel enforcement (session serialization, deadline
	enforcement) when TEEs are unavailable or undesirable.
	
	\noindent\textbf{Continuous verification of a single resource.}
	A structurally related line of work formalizes periodic
	challenge-response verification that binds time to continued
	possession of a resource. Proofs of
	Retrievability~\cite{juels2007pors,shacham2008compact} and Proof of
	Storage-Time~\cite{ateniese2020post} let a server periodically prove
	to a verifier that it still possesses an outsourced file, without
	re-transmitting it---structurally the closest primitive-level
	precedent to \BPC's per-window proof: both are challenge-response
	protocols whose output is a fresh, per-interval, publicly checkable
	claim. The relationship they formalize, however, runs between one
	prover and one file. There is no notion, built into the primitive
	itself, of binding many independently claimed identities to a joint
	per-channel throughput ceiling; deployments that need Sybil-resistance
	for storage providers add it as a separate economic mechanism (e.g.,
	requiring a distinct replication proof per claimed copy), not as a
	consequence of the storage-time proof's own contract. \BPC\ instead
	makes the per-channel throughput bound part of the primitive's
	contract from the start (Property~H4, Section~\ref{sec:spec}), so
	that the cost of sustaining $s$ identities follows directly from
	Properties~(H1)--(H4) rather than from an auxiliary mechanism layered
	on top.
	
	\noindent\textbf{Resource-based anti-Sybil mechanisms.}
	Proof-of-Work~\cite{nakamoto2008bitcoin} and
	Proof-of-Stake~\cite{kiayias2017ouroboros} bind influence to
	computational effort or economic stake.
	Bonneau et al.~\cite{bonneau2015sok} survey the design space.
	These resources are \emph{parallelizable}: once acquired, hashpower
	or stake can in principle be subdivided across $s$ identities without
	a fresh acquisition event per identity. Resource reusability or
	parallelizability alone does not determine whether sustaining that
	influence over time incurs sublinear or linear economic cost; the
	latter depends on a separate, explicit economic assumption about the
	cost of sustaining the resource, not on reusability
	alone~\cite{maleki2026geometry}. Proof-of-Space and
	storage-based mechanisms~\cite{dziembowski2018spacemint} offer lower
	energy cost but retain the same parallelizability structure.
	Rate-limited token systems such as PoW-based email
	stamps~\cite{dwork1993pricing} impose per-request costs but do not
	bind costs to specific identities across time windows, offering no
	equivalent of $C(s,T) = \Omega(sT)$.
	
	\noindent\textbf{Solver throughput under real-time constraints.}
	\BPC's security argument is solver-agnostic: it does not depend on
	humans outperforming AI or on AI being unable to solve individual
	challenges. The relevant quantity is the \emph{rate} at which valid
	responses can be produced per channel under strict deadlines, not
	per-challenge accuracy. Prior work has documented human and AI
	performance characteristics on perceptual and reasoning
	tasks~\cite{geirhos2018generalisation,geirhos2019texture,%
		geirhos2020shortcut,lake2017building,schutt2023humanvsai}, as well
	as the increasing capability of frontier VLMs~\cite{openai2023gpt4,%
		geminiteam2023,anthropic2024claude} on unconstrained tasks. These
	findings are relevant to estimating per-channel $\tau_h$ values for
	specific challenge families, but they do not bear on whether \BPC's
	cost guarantee holds: that guarantee rests on channel throughput
	alone, and our evaluation (Section~\ref{sec:eval}) shows that
	high per-challenge accuracy does not translate to unbounded throughput
	under deployment-realistic deadlines for any channel type we evaluated.
	
	\noindent\textbf{Classical primitives, individually.}
	It is natural to ask whether \BPC\ is simply a repackaging of
	existing primitives. Taken individually, none of the following
	satisfy~(H1)--(H4) jointly.
	
	\emph{Digital signatures.} A signature
	$\sigma = \mathsf{Sign}(sk_v, m)$ satisfies~(H1)---unforgeability
	binds $\sigma$ to $v$---but says nothing about \emph{when} $\sigma$
	was produced: $v$ can sign today and reveal $\sigma$ years later.
	Composing a signature with a fresh external nonce recovers~(H2) but
	not~(H3): as discussed in Section~\ref{sec:spec}, signing is an
	offline computation with no enforced deadline, so throughput is
	bounded only by computational speed, not by any channel.
	
	\emph{Timestamping.} Digital
	timestamping~\cite{haber1991timestamp} certifies that a document
	existed \emph{before} a given time, typically via a hash chain or
	trusted timestamping authority. This gives an upper bound on when a
	proof was created, not a lower bound: it does not certify that the
	identity was online and responsive at any particular moment, only
	that some artifact predates a point in time. A timestamp can be
	requested for a precomputed artifact at any later moment, violating
	the requirement that a proof's temporal context not be selectable
	by the identity in advance~(H2).
	
	\emph{Verifiable Random Functions.} A
	VRF~\cite{micali1999vrf} produces a pseudorandom value together with
	a publicly verifiable proof of correctness from a secret key and a
	seed---structurally the closest single-shot analogue to
	$\textsf{Prove}$/$\textsf{Verify}$ (Section~\ref{sec:spec}). But a
	VRF evaluation is, like a signature, an offline computation: nothing
	prevents batch-evaluating a VRF over many candidate seeds at once,
	and a VRF alone carries no channel-level throughput bound across many
	keys. It is a plausible component to build $\textsf{Prove}$ from---this
	paper does not---not a substitute for~(H4).
	
	\emph{Remote attestation.} TEE-based
	attestation~\cite{costan2016sgx} certifies that specific code ran on
	specific hardware, which can supply~(H1) (device-bound keys) and,
	with periodic re-attestation, something like~(H2)--(H3). It is a
	legitimate \emph{construction} for Property~(H4)---one of the
	channel-serialization bases discussed in
	Section~\ref{sec:spec}---but it is not itself a statement about
	identity-time cost across many identities: nothing in the
	TEE-attestation literature formalizes a multi-identity throughput
	ceiling or derives a cost theorem from it. The contribution of
	\BPC\ is not any one of these primitives but the composition that
	makes all four properties hold jointly, together with the explicit
	reduction from that composition to a cost theorem
	(Sections~\ref{sec:cost}--\ref{sec:multiwindow}).
	
	\noindent\textbf{Synthesis: what \BPC\ uniquely provides.}
	Table~\ref{tab:hco-taxonomy} organizes prior mechanisms along four
	dimensions critical for sustained Sybil resistance: \emph{persistence}
	(is verification repeated over time?), \emph{per-identity
		rate-limiting} (does each identity face an independent bound?),
	\emph{solver-agnosticism} (does security depend on human superiority
	over AI?), and \emph{joint cost scaling} (does the total adversarial
	cost grow with both $s$ and $T$?). No prior mechanism achieves all
	four simultaneously. Graph-based defenses are not persistent in the
	cost sense. CAPTCHAs and PoP provide one-time $C(s,1)$ bounds but
	not $C(s,T)$ over time. Behavioral biometrics are not per-identity
	rate-limited in the cryptographic sense and are not designed for open
	adversarial settings. Continuous storage proofs formalize persistence
	for a single prover but have no notion of per-identity binding at all.
	Resource-based mechanisms enforce persistence but not per-identity
	independence, and whether the resulting joint cost is sublinear or
	linear is conditional on the economics of sustaining the resource
	rather than fixed by the mechanism's rules alone. \BPC\
	is the first mechanism to unify all four properties in a formally
	specified, solver-agnostic primitive---and the only one that exports
	its verdict as a public, per-window proof $\pi(v,d)$ rather than an
	opaque internal score, with a provable $C(s,T) \geq sT/\tau_h$ lower
	bound.
	
	\begin{table*}[t]
		\centering
		\caption{Comparison of identity verification and anti-Sybil
			mechanisms along four dimensions for sustained Sybil resistance.
			\emph{Persistent}: is verification repeated after admission?
			\emph{Per-identity}: does each identity face an independent bound,
			not shared across a pool?
			\emph{Solver-agnostic}: does security hold without assuming human
			cognitive superiority?
			\emph{Joint cost}: does adversarial cost scale as $\Omega(sT)$
			jointly in identities $s$ and time $T$? Entries marked
			\emph{Conditional} depend on a separate, explicit economic
			assumption about the cost of sustaining the resource, not on
			reusability or parallelizability alone~\cite{maleki2026geometry}.}
		\label{tab:hco-taxonomy}
		\resizebox{\linewidth}{!}{%
			\begin{tabular}{llcccc}
				\toprule
				Mechanism & Representative Systems
				& Persistent & Per-identity & Solver-Agnostic & Joint Cost \\
				\midrule
				Graph-based defenses
				& \cite{yu2006sybilguard,yu2008sybillimit,cao2012syrank}
				& No  & No  & Yes & $O(s)$ only \\
				One-time CAPTCHA
				& \cite{vonahn2003captcha,recaptcha2019,hcaptcha2020}
				& No  & No  & No  & $O(s)$ once \\
				Proof-of-Personhood
				& \cite{borge2017proofofpersonhood,worldcoin2023,semaphore2023}
				& No  & No  & Partial & $O(s)$ once \\
				Behavioral biometrics
				& \cite{frank2013touchalytics,serwadda2013kids,freeman2016onion}
				& Yes & No  & Partial & Unclear \\
				Trusted hardware (TEE)
				& \cite{costan2016sgx,davidson2023privacypass}
				& Yes & Yes & Yes & $\Omega(sT)$ (hardware-bound) \\
				Continuous storage proofs
				& \cite{juels2007pors,shacham2008compact,ateniese2020post}
				& Yes & N/A & Yes & N/A (single-prover) \\
				Resource-based (PoW/PoS)
				& \cite{nakamoto2008bitcoin,kiayias2017ouroboros}
				& Yes & No  & Yes & Conditional \\
				Rate-limited tokens
				& \cite{dwork1993pricing}
				& Partial & No & Yes & $o(sT)$ \\
				\midrule
				\textbf{\BPC\ (this work)}
				& ---
				& \textbf{Yes} & \textbf{Yes} & \textbf{Yes} & $\boldsymbol{\Omega(sT)}$ \\
				\bottomrule
		\end{tabular}}
	\end{table*}
	
	\section{\BPC\ Specification}
	\label{sec:spec}

	\begin{figure}[t]
		\centering
		\resizebox{\linewidth}{!}{%
			\begin{tikzpicture}[
				node distance=6mm,
				box/.style={rectangle, rounded corners, draw, thick, minimum width=2cm,
					minimum height=1.3cm, align=center, font=\bfseries\small, inner sep=4pt},
				lbl/.style={font=\small\itshape, align=center, text width=2.1cm},
				>=Stealth
				]
				\node[box, fill=blue!8, draw=blue!60]  (id)   {Identity\\$v$};
				\node[box, fill=teal!8, draw=teal!60, right=of id]    (win)  {Window $d$\\Index $j$};
				\node[box, fill=purple!8, draw=purple!60, right=of win]  (chal) {\BPC\\challenge\\$\chi_{v,d,j}$};
				\node[box, fill=orange!8, draw=orange!60, right=of chal] (resp) {Response\\$\rho$};
				\node[box, fill=green!8, draw=green!60!black, right=of resp] (ver)  {Verify\\Alg.~1};
				\node[right=8mm of ver, font=\bfseries\large] (out) {\textcolor{green!50!black}{\checkmark}/\textcolor{red}{\ding{55}}};
				
				\draw[->, thick] (id) -- (win);
				\draw[->, thick] (win) -- (chal);
				\draw[->, thick] (chal) -- (resp);
				\draw[->, thick] (resp) -- (ver);
				\draw[->, thick] (ver) -- (out);
				
				\node[lbl, below=2mm of id]   {(H1)\\binding};
				\node[lbl, below=2mm of chal] {(H2)\\freshness};
				\node[lbl, below=2mm of resp] {(H3)\\deadline};
				\node[lbl, below=2mm of ver]  {(H4)\\$\tau_h$ bound};
			\end{tikzpicture}%
		}
		\caption{\BPC\ participation flow. Each identity $v$ requests a
			fresh challenge per window $d$; the verifier binds it to $(v,d,j)$
			via~(H1)--(H2). The participant responds within $\Delta_{\mathrm{resp}}$
			(enforcing~H3); Algorithm~1 verifies correctness and timeliness.
			Property~(H4) bounds how many challenges any single channel
			can service per window.}
		\label{fig:architecture}
	\end{figure}
	
	Figure~\ref{fig:architecture} summarizes the resulting participation
	flow, from challenge issuance through verification.
	
	\noindent\textbf{Scope of cryptographic enforcement.}
	Properties~(H1)--(H3) are enforced by the hash-based construction
	of Section~\ref{sec:construction} under standard cryptographic
	assumptions. Property~(H4) is a structural constraint on participation
	channels---enforced either by cognitive serialization (for human
	channels) or by engineering mechanisms such as session serialization
	and device attestation (for automated channels). The security proofs
	of Sections~\ref{sec:cost}--\ref{sec:multiwindow} depend on~(H4)
	as an assumption whose empirical basis for representative challenge
	families is validated in Section~\ref{sec:eval}.
	
	\subsection{Primitive Interface}
	
	Before stating the formal definition, it is useful to separate three
	things that are easy to conflate: the \emph{event} the primitive is
	about, the \emph{claim} a verifier can check, and the \emph{proof}
	that makes the claim checkable without trusting whoever issued the
	qualifying action or received the identity's proof of having
	performed it.
	
	\noindent\textbf{Event.} Identity $v$ performs an action, during
	temporal context $d$ (a \emph{window}), that (i)~requires possession
	of $v$'s credential to perform, and (ii)~could not have been
	performed before $d$ began. What the action concretely consists
	of---a challenge response, a heartbeat, a hardware
	attestation---is left to the realization (Section~\ref{sec:construction});
	only these two structural requirements are part of the primitive.
	
	\noindent\textbf{Claim.} A verifier who accepts $\pi_{v,d}$ may
	conclude: identity $v$---and no other identity---performed the
	qualifying action for context $d$, and did so no earlier than $d$
	permits. This is a claim about what the verifier may believe, not
	merely a description of what occurred.
	
	\noindent\textbf{Proof.} A publicly verifiable object $\pi_{v,d}$
	substantiating the claim. Verification requires no long-term storage
	of participant data and no trust in the issuing party---only the
	identity's registered public key and the public transcript of
	context $d$.
	
	\begin{definition}[Bounded Participation Channel]
		\label{def:hco}
		The \emph{Bounded Participation Channel} (\BPC) is a formal primitive that
		realizes the Event/Claim/Proof pattern above through two operations:
		\[
		\textsf{Prove}: (v,\,d) \;\longrightarrow\; \pi_{v,d},
		\qquad
		\textsf{Verify}: (v,\,d,\,\pi_{v,d}) \;\longrightarrow\; \{0,1\}.
		\]
		At this level, \textsf{Prove} and \textsf{Verify} are left
		abstract: how a proof is produced and what form it takes are
		construction decisions, not part of the primitive's contract. A
		valid proof sets the participation bit $\pi(v,d)=1$
		(Section~\ref{sec:model}); \BPC\ neither stores nor aggregates
		this bit across windows---each window's claim must be
		independently re-established. \BPC\ is solver-agnostic by
		construction: the definition and security analysis impose no
		requirement on what performs the qualifying action---a person, a
		script, or a model---and apply uniformly to any participation
		channel satisfying~(H1)--(H4).
	\end{definition}
	
	Section~\ref{sec:construction} instantiates \textsf{Prove} and
	\textsf{Verify} concretely, as an interactive challenge-response
	exchange; a non-interactive realization (e.g., periodic device
	attestation) would satisfy the same contract without ever issuing a
	challenge. Nothing in this section assumes interactivity.
	
	\subsection{Security Properties}
	
	\BPC's contract is that a proof $\pi_{v,d}$ satisfying \textsf{Verify}
	substantiates the claim above only if the following hold. Properties
	(H1)--(H2) are stated as adversarial objectives---each is directly
	the basis of a security game (Section~\ref{sec:model})---rather than
	as implementation detail; (H3) is a primitive-level qualitative
	requirement, realized concretely in Section~\ref{sec:construction};
	(H4) is a channel-capacity assumption belonging to a different layer
	entirely, stated formally as Assumption~\ref{asm:throughput} in
	Section~\ref{sec:model} and needed only for the multi-identity cost
	theorem of Sections~\ref{sec:cost}--\ref{sec:multiwindow}, not for
	any single $\pi_{v,d}$ to be meaningful.
	
	\begin{itemize}
		\item \textbf{(H1) Identity binding.} No adversary, even given
		valid proofs for identities other than $v$, can produce a $\pi$
		that \textsf{Verify} accepts for $v$ without controlling $v$'s
		secret key material.
		
		\item \textbf{(H2) Non-precomputability.} No adversary can produce
		a $\pi$ that \textsf{Verify} accepts for context $d$ before $d$'s
		context is determined, \emph{and} $d$'s context is not selectable
		or predictable in advance by $v$ itself. The first clause alone is
		not sufficient: an identity that could choose its own context
		trivially satisfies "not valid before a context it picked," while
		gaining nothing in security.
		
		\item \textbf{(H3) Bounded temporal context.} Every valid
		$\pi_{v,d}$ must correspond to an action that occurred within a
		bounded interval---the claim is scoped to $d$, not to "at some
		point." Together with~(H2), boundedness rules out any strategy
		that separates the qualifying action from proof production
		across time. The concrete deadline enforcing this bound is a
		deployment parameter, fixed in Section~\ref{sec:construction},
		not part of the primitive's definition.
		
		\item \textbf{(H4) Per-channel throughput bound.} A structural
		fact about channels, not a property of any single $\pi_{v,d}$:
		stated formally as Assumption~\ref{asm:throughput}
		(Section~\ref{sec:model}) and not enforced by the cryptographic
		construction of Section~\ref{sec:construction}. It is what makes
		the multi-identity cost theorem of
		Sections~\ref{sec:cost}--\ref{sec:multiwindow} possible. The
		structural distinction between temporally reusable resources and
		resources that must be re-acquired in every participation window
		is developed in~\cite{maleki2026geometry}; \BPC\ instantiates the
		latter case through its bounded, per-window participation-channel
		model.
	\end{itemize}
	
	\noindent\textbf{Why not signature + nonce + timestamp?} A plain
	signature over a fresh nonce already satisfies~(H1) and~(H2), but
	not~(H3) with any enforced deadline: signing is an offline
	computation, so nothing stops an identity from producing it at
	leisure, long after the nonce is published. Without a real-time
	deadline, signing throughput is bounded only by computational
	speed, not by any channel constraint, making~(H4) and the joint
	cost theorem vacuous. Property~(H3) is what converts an offline
	computation into a channel-bound, real-time act;
	Section~\ref{sec:related} compares this construction in full
	against signatures, timestamping, and remote attestation.
	
	\begin{proposition}[Security is solver-agnostic]
		\label{prop:agnostic}
		Properties~(H1)--(H2) hold against a PPT adversary
		$\mathcal{A}_{\mathrm{crypto}}$ under the random oracle model
		and EUF-CMA; (H3) holds unconditionally. Under these and
		Assumption~\ref{asm:throughput}~(H4), sustaining $s$ active
		identities over $T$ windows requires
		\[
		C(s,T) \;\geq\; \frac{sT}{\tau_h}
		\;=\; sT\cdot\frac{\underline{\delta}_{\mathrm{rtt}}}{\Delta_{\mathrm{resp}}}
		\]
		independent channel-windows, regardless of whether channels are
		operated by humans, automated agents, or any combination. The
		cost argument depends only on~(H4) and not on computational
		hardness: the bound holds against any adversary---PPT or
		unbounded---given Assumption~\ref{asm:throughput}. This result is
		structural: $C(s,T)$ here counts independent participation
		channel-windows required to sustain a target participation level,
		not an economic expenditure. The relationship between a structural
		requirement of this kind and the separate economic conditions
		under which it becomes a cost floor in currency is developed
		in~\cite{maleki2026geometry}.
	\end{proposition}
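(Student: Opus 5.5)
The plan is to prove the proposition in three layers that mirror its statement: the cryptographic properties (H1)--(H2), the unconditional timing property (H3), and then the counting argument that turns (H1)--(H4) into the bound on $C(s,T)$. For (H1)--(H2), I will invoke the security games of Section~\ref{sec:model} together with the hash-based construction of Section~\ref{sec:construction}. For (H1), I will reduce directly: an adversary that outputs an accepted $\pi$ for $v$ without $v$'s secret key, while holding proofs for other identities as signing-oracle answers, yields an EUF-CMA forger. For (H2), the challenge for $(v,d,j)$ is a random-oracle output on a window transcript that is fixed only when $d$ opens. A response produced before that point must therefore guess an unqueried oracle value, so the adversary succeeds with probability at most $q/2^{\lambda}$ plus the forging advantage. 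Since $d$'s context comes from the public transcript and not from $v$, the second clause of (H2) follows from the same argument. (H3) needs no assumption at all: \textsf{Verify} rejects any response whose timestamp lies outside $[t_{\mathrm{issue}}, t_{\mathrm{issue}}+\Delta_{\mathrm{resp}}]$, so every accepted proof satisfies the bound by definition of the acceptance predicate.

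For the cost bound, I will count channel-windows. Fix any adversary, PPT or unbounded, that keeps $s$ identities active for all $T$ windows. It must then produce $sT$ accepted proofs, one for each pair $(v,d)$. By (H1) and (H2), with the negligible failure probabilities from the first layer, no accepted proof can be reused across identities or across windows, so these $sT$ proofs are distinct and each required a fresh response. By (H3), each response must be delivered within $\Delta_{\mathrm{resp}}$ of issuance, and hence inside window $d$. Assumption~\ref{asm:throughput} (H4) says a single channel delivers at most $\tau_h$ valid responses per window. A pigeonhole argument applied to each window then forces at least $\lceil s/\tau_h\rceil$ channels in that window. Summing over $T$ windows gives $C(s,T)\ge sT/\tau_h$. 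The identity $\tau_h=\Delta_{\mathrm{resp}}/\underline{\delta}_{\mathrm{rtt}}$ should follow from the form Assumption~\ref{asm:throughput} takes: session serialization means each response occupies the channel for at least one round trip $\underline{\delta}_{\mathrm{rtt}}$, so at most $\Delta_{\mathrm{resp}}/\underline{\delta}_{\mathrm{rtt}}$ responses fit within the deadline. Because the argument never refers to what operates a channel, solver-agnosticism is immediate.

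The main obstacle is making the cryptographic layer and the counting layer fit together. The count needs \emph{every} accepted proof to be distinct and fresh, but (H1)--(H2) hold only except with negligible probability, and only against PPT adversaries, whereas the statement claims the bound holds for unbounded adversaries. I will resolve this by stating the cost bound conditionally: it holds for any adversary on the event that no forgery or precomputation occurs. The unconditional-in-computation claim then relies purely on (H4), where the assumption itself is taken to bound valid responses that a verifier accepts. If Assumption~\ref{asm:throughput} turns out to count only honest responses, I will instead add a negligible additive slack term. A secondary point to check is that $\tau_h$ is a per-window quantity, so that the per-window pigeonhole bounds add up across windows with no amortization; this is exactly where freshness (H2) is used a second time.
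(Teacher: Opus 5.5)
Your architecture is the paper's: the Section~\ref{sec:construction} arguments for (H1)--(H3), then the per-window pigeonhole of Lemma~\ref{lem:identity-bound} and Theorem~\ref{thm:linear-cost}, summed over windows as in Theorem~\ref{thm:joint-linear}. The step that does not go through is your (H2) bound. In the construction the signature covers only $\mathsf{pk}_v\|d\|j\|r_{v,d,j}$. The binding-tag check of Algorithm~\ref{alg:verify} is applied to the issued challenge, not to anything the participant sends. So the response depends on $\eta_d$ only through the correctness of $r_{v,d,j}$. The (H2) adversary may be $v$ itself, which is exactly the Sybil operator's position, so no forgery is involved. It signs a guessed answer before $\eta_d$ exists. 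For the perceptual family, where $r_{v,d,j}$ is one of four indices, \textsf{Verify} then accepts with constant probability (about $1/4$). Hence ``must guess an unqueried oracle value'' does not follow. The achievable bound is the family's guessing probability $\max_{r\in\mathcal{R}}\Pr[\mathsf{CheckTask}(r,\phi_{v,d,j})=1]$ over a uniform seed, not $q/2^{\lambda}$ plus a forging advantage. Your argument becomes correct only if the signed message is changed to include $\mathrm{seed}_{v,d,j}$ (or $\eta_d$). The paper's own (H2) argument shows only that $\chi_{v,d,j}$ cannot be computed early, so it does not close this either. Still, your explicit bound is false for the construction as written.

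Two further points. First, the ``identity'' $\tau_h=\Delta_{\mathrm{resp}}/\underline{\delta}_{\mathrm{rtt}}$ fails because of the floor. Your serialization argument yields only $\tau_h\le\Delta_{\mathrm{resp}}/\underline{\delta}_{\mathrm{rtt}}$, hence $C(s,T)\ge sT/\tau_h\ge sT\,\underline{\delta}_{\mathrm{rtt}}/\Delta_{\mathrm{resp}}$. The second bound is strictly weaker in general: $\Delta_{\mathrm{resp}}=8$, $\underline{\delta}_{\mathrm{rtt}}=2.7$ gives $sT/2$ versus $0.3375\,sT$. The ``$=$'' in the statement should be read this way.

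Second, your treatment of PPT versus unbounded adversaries is sharper than the paper's. The proof of Lemma~\ref{lem:identity-bound} invokes (H1)--(H3), which are only computational, yet the proposition claims the count against unbounded adversaries. Your second resolution reads Assumption~\ref{asm:throughput} as bounding every \emph{accepted} response per channel, where each acceptance is a separate \textsf{Verify} call for a single $(v,d,j)$ consuming its own round trip. That removes (H1)--(H3) from the count entirely, which is what makes the ``PPT or unbounded'' clause true. It also neutralizes the guessing attack above, since a guessed response still occupies a round trip. Commit to that reading. Drop the ``condition on no forgery'' and ``negligible slack'' variants: both are vacuous against an unbounded adversary, for whom forgery need not be rare.
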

	
	The formal proofs of Lemmas~\ref{lem:identity-bound}
	and~\ref{lem:persistence} and
	Theorems~\ref{thm:linear-cost}--\ref{thm:steady-state} appear
	in Sections~\ref{sec:cost}--\ref{sec:multiwindow}.

	\begin{remark}
		Properties~(H1)--(H4) jointly constitute a strictly stronger primitive
		than per-account rate limiting. Rate limiting bounds request frequency
		but provides no cryptographic identity binding, no freshness guarantee,
		and no composable cost theorem. \BPC\ provides all four, enabling the
		$C(s,T) = \Omega(sT)$ result of
		Sections~\ref{sec:cost}--\ref{sec:multiwindow} and composability with
		higher-level protocols that consume its exported participation
		proofs.
	\end{remark}
	
	\subsection{On Property~(H4) and Solver Identity}
	\label{sec:h4-solver}
	
	Property~(H4) asserts that a single channel is throughput-bounded per
	window under $\Delta_{\mathrm{resp}}$. This is \emph{not} equivalent to
	assuming that humans outperform machines on individual challenges, and
	should not be interpreted as a claim that automated systems cannot solve
	\BPC\ challenges. The security argument depends only on bounded throughput
	per participation channel, not on solver identity. Whether channels are
	operated by humans, AI systems, or hybrid pipelines is orthogonal to the
	cost theorem. A single automated agent may solve any given challenge
	instantly; what~(H4) asserts is that a single channel cannot handle
	$\omega(1)$ distinct challenge indices per window within
	$\Delta_{\mathrm{resp}}$, due to the irreducible latency of each
	challenge-response round trip (network delivery, processing, submission,
	and verification). For human channels, cognitive processing time
	contributes to this bound; for automated channels, inference and network
	latency play the same structurally equivalent role. Empirical
	validation of~(H4) for specific challenge families is given in
	Section~\ref{sec:eval}.
	
	\noindent\textbf{Two independent bases for~(H4).}
	Property~(H4) can be grounded in two structurally distinct mechanisms,
	and either one is sufficient for the security argument.
	
	\emph{Cognitive serialization.}
	Human cognition imposes an irreducible serial bottleneck: a single person
	cannot attend to, process, and respond to multiple independent challenges
	simultaneously within a short deadline. This bound derives from cognitive
	and motor throughput limits, not from any assumption of human superiority
	over automated solvers.
	
	\emph{Channel serialization.}
	Engineering constraints can enforce~(H4) independently of who or what
	operates the channel. Concrete mechanisms include:
	\begin{itemize}
		\item \textbf{Per-session serialization.} The system issues at most
		one active challenge per session token at any moment; the browser
		session or WebSocket connection is the channel boundary. A channel
		serving multiple identities in parallel must hold multiple concurrent
		sessions, each incurring its own round-trip budget.
		\item \textbf{Device-bound attestation.} Challenges are bound to
		a specific hardware token, secure enclave, or TPM that enforces
		a per-device rate limit. Scaling to $s$ identities requires $s$
		independently attested devices.
		\item \textbf{Stateful interaction locality.} Challenges require
		continuous interaction with a dynamically mutating UI element
		throughout $\Delta_{\mathrm{resp}}$, making parallel execution across
		identities on a single device structurally difficult.
	\end{itemize}
	
	Under channel serialization, the bound on $\tau_h$ is a systems
	constraint rather than a cognitive claim: it follows from the
	serialization enforced by the channel architecture. The formal security
	argument of Sections~\ref{sec:cost}--\ref{sec:multiwindow} holds under
	either basis for~(H4); deployments may select the enforcement mechanism
	best suited to their trust model and accessibility requirements.
	
	\noindent\textbf{Concurrent versus sequential service.}
	Serialization bounds \emph{concurrent} service only: a channel cannot
	serve multiple identities at the same instant, but nothing prevents it
	from serving multiple identities \emph{sequentially} within one
	window, one round trip after another, up to the response-time bound.
	This is exactly what $\tau_h = \lfloor\Delta_{\mathrm{resp}}/\underline{\delta}_{\mathrm{rtt}}\rfloor$
	(Assumption~\ref{asm:throughput}) counts: the number of sequential
	round trips a single channel can complete before the deadline. Lemma~\ref{lem:identity-bound}
	and the multi-instance accounting of Section~\ref{sec:security} (e.g.,
	334 solver instances sustaining 1{,}000 identities at $\tau_h=3$) both
	rely on this sequential-reuse reading, not on a channel serving only
	one identity per window.
	
	\subsection{Channel Semantics and Enforcement Models}
	\label{sec:channel-semantics}
	
	The word ``channel'' is deliberately abstract in the definition of
	\BPC\ (Section~\ref{sec:spec}), but a concrete deployment must commit
	to what counts as one channel and how~(H4) is enforced for it.
	Table~\ref{tab:channel-semantics} lists the enforcement models
	introduced above, together with the failure mode each one is
	vulnerable to and where in the paper each is instantiated.
	
	\begin{table*}[t]
		\centering
		\caption{Channel enforcement models. \emph{Instantiated in} points to
			where each model is used as a concrete example or evaluated
			empirically; the formal results of
			Sections~\ref{sec:cost}--\ref{sec:multiwindow} are stated
			abstractly over~(H4) and hold under any row of this table.}
		\label{tab:channel-semantics}
		\resizebox{\linewidth}{!}{%
			\begin{tabular}{llll l}
				\toprule
				Channel model & What counts as one channel & Enforcement mechanism & Main limitation & Instantiated in \\
				\midrule
				Human operator
				& One person
				& Cognitive/motor serialization (inherent)
				& Outsourcing to paid human labor markets
				& Section~\ref{sec:h4-solver}; human ranges in Table~\ref{tab:evaluation-results} \\
				Browser session
				& One active authenticated session/connection
				& Server-side session serialization (one open challenge per session)
				& Bot/session farms provisioning many concurrent sessions
				& Section~\ref{sec:h4-solver} \\
				Device-bound channel
				& One attested device or key
				& Hardware attestation (TPM, TEE, passkey)
				& Device farms; emulator/attestation-spoofing resistance
				& Section~\ref{sec:h4-solver}; Table~\ref{tab:hco-taxonomy} \\
				API solver channel
				& One concurrent inference stream (API key/session/quota)
				& API-level rate and concurrency limits
				& Parallel API keys; bulk/tail-latency exploitation
				& Section~\ref{sec:eval}; 334-instance example (Section~\ref{sec:security}) \\
				UI-occupancy channel
				& One continuously occupied interactive surface
				& Sustained interaction required throughout $\Delta_{\mathrm{resp}}$
				& Automation across multiplexed virtual displays
				& Section~\ref{sec:h4-solver} \\
				\bottomrule
			\end{tabular}%
		}
	\end{table*}
	
	Each row fixes a different value of $\tau_h$ but leaves the cost
	theorem itself unchanged: Theorem~\ref{thm:joint-linear} and
	Corollary~\ref{cor:no-sublinear} are proved for an arbitrary channel
	satisfying~(H4) and do not depend on which enforcement model
	supplies it. The empirical evaluation of Section~\ref{sec:eval}
	instantiates the \emph{API solver channel} row specifically---the
	measured $\tau_h$ values in Table~\ref{tab:evaluation-results} are
	not automatically the right calibration for a device-bound or
	UI-occupancy deployment, which may instead fall under the
	exclusive-channel regime ($\tau_h=1$, Remark~\ref{rem:tau-regimes})
	enforced by hardware or interaction constraints rather than by
	response latency.
	
	\subsection{Separability and Composability}
	
	The security guarantees of \BPC\ depend only on~(H1)--(H4). Application-level
	details---user interfaces, challenge formats, delivery channels---do not affect
	these guarantees. This separability enables \BPC\ to be instantiated across
	diverse systems and composed with higher-level protocols, which remain
	free to define their own policies over the exported participation
	proofs $\pi(v,\cdot)$.
	
	\subsection{Delegation and Key Sharing}
	\label{sec:delegation}
	
	\BPC's properties are proofs about a cryptographic key, not about the
	person or process holding it. Property~(H1) ensures that a valid
	response is bound to identity $v$'s registered key $\mathsf{pk}_v$ and
	cannot be credited to any other identity; it does not, and cannot,
	distinguish between $v$ personally producing that response and $v$
	having delegated challenge-solving or shared signing access to a
	third party---a family member, a paid worker, or an automated agent
	acting under $\mathsf{sk}_v$. \BPC's guarantee is therefore precisely
	that sustained access to identity $v$'s signing capability was
	exercised in every counted window, not that a specific human being
	was sustainedly present. This is a deliberate scope choice, consistent
	with \BPC's solver-agnostic design
	(Section~\ref{sec:h4-solver}): the primitive prices sustained
	participation at the channel-window level regardless of what or who
	operates the channel, and does not itself prevent a participant from
	delegating that operation. Deployments that require a stronger
	non-delegation guarantee---binding participation to a specific
	individual, not merely to possession of a specific key---should
	compose \BPC\ with an external mechanism that restricts key custody,
	such as non-exportable hardware-bound keys (secure enclaves, TPMs,
	passkeys) or server-controlled session flows that never expose
	$\mathsf{sk}_v$ to the participant directly (the device-bound channel
	model of Table~\ref{tab:channel-semantics}).

	\section{A Hash-Based Realization}
	\label{sec:construction}
	
	Section~\ref{sec:spec} left $\textsf{Prove}$ and $\textsf{Verify}$
	abstract, specifying only the Event/Claim/Proof contract and
	Properties~(H1)--(H4) that any realization must satisfy. This
	section gives one such realization: a concrete hash-based
	instantiation satisfying Properties~(H1)--(H3) under standard
	assumptions---the random oracle model~\cite{bellare1993random} for
	the hash function, and existential unforgeability under
	chosen-message attack for the signature scheme---via an interactive
	challenge-response exchange. It is not the only realization the
	contract admits: Section~\ref{sec:spec} already noted that periodic
	device attestation would satisfy the same contract without ever
	issuing a challenge, and Section~\ref{sec:related} discusses why
	classical primitives alone---signatures, timestamps, VRFs,
	attestation---do not, individually, suffice. Property~(H4) is a
	channel-level structural constraint that no cryptographic
	construction enforces; it is analyzed separately in
	Section~\ref{sec:model} and empirically validated in
	Section~\ref{sec:eval}. The construction presented here is
	self-contained and requires no trusted hardware.
	
	\begin{figure}[t]
		\centering
		\includegraphics[width=\linewidth]{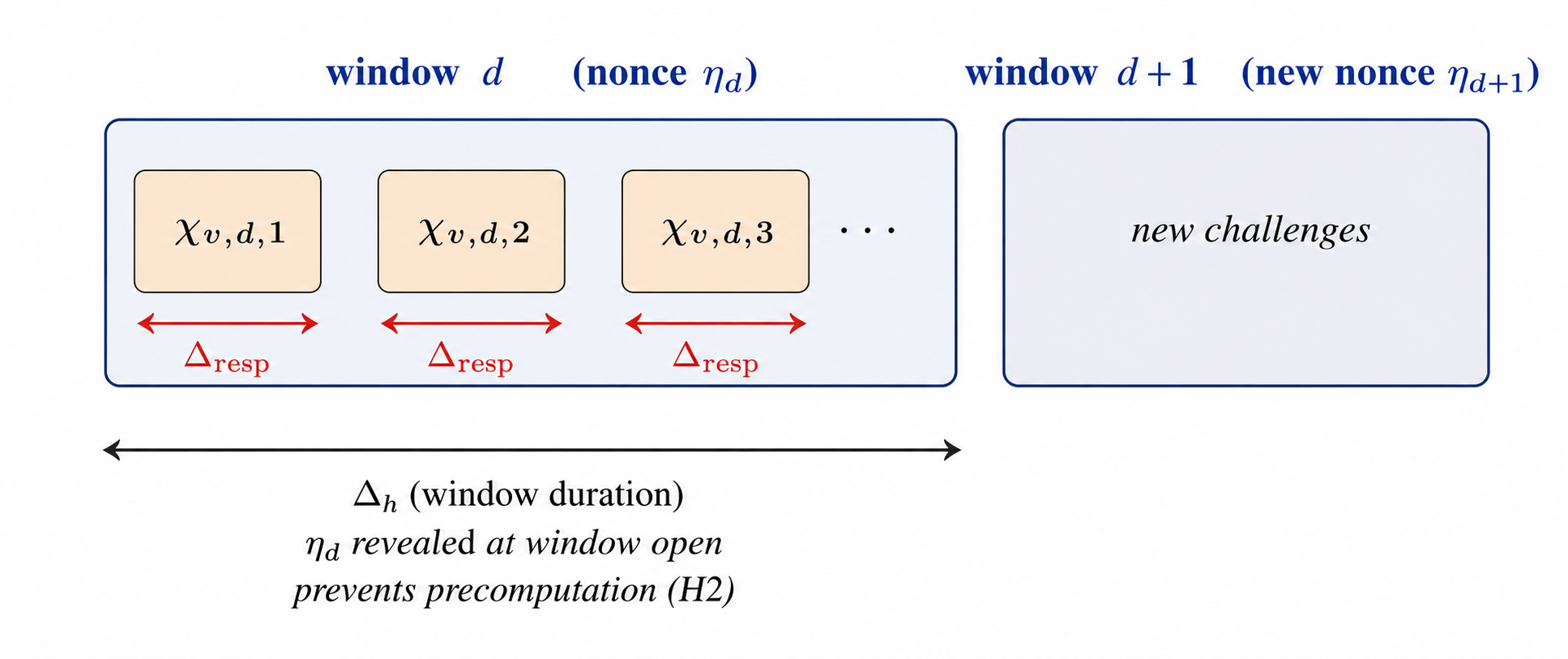}
		\caption{Two-timescale structure. A window of duration $\Delta_h$
			contains multiple challenges $\chi_{v,d,j}$, each with a short
			deadline $\Delta_{\mathrm{resp}} \ll \Delta_h$. A fresh nonce
			$\eta_d$ is drawn at window open, preventing precomputation~(H2);
			the deadline enforces real-time response~(H3). Cross-window
			reuse is ruled out by~(H2): solutions computed in window $d$
			are invalid in $d\!+\!1$.}
		\label{fig:timeline}
	\end{figure}
	
	\textsf{Prove} is realized here as an interactive challenge-response
	exchange, illustrated in Figure~\ref{fig:timeline}: on input a
	participant identity $v$, window index $d \in \mathbb{N}$, and
	challenge index $j \in \{1,\ldots,k(d)\}$, the primitive issues a
	fresh challenge instance $\chi_{v,d,j}$ with response deadline
	$\Delta_{\mathrm{resp}} \ll \Delta_h$. A response $\rho$ to
	$\chi_{v,d,j}$ is \emph{valid} iff: (i)~$\rho$ is received within
	$\Delta_{\mathrm{resp}}$ of issuance (realizing~H3); (ii)~$\rho$ is a
	correct solution to $\chi_{v,d,j}$; (iii)~$\rho$ is cryptographically
	bound to the triple $(v,d,j)$ (realizing~H1). A valid response
	constitutes the proof $\pi_{v,d}$.
	
	\subsection{Setup}
	
	Let $\mathcal{H}: \{0,1\}^* \to \{0,1\}^\lambda$ be a collision-resistant
	hash function modeled as a random oracle~\cite{bellare1993random}, with
	security parameter $\lambda$. We instantiate $\mathcal{H}$ with
	SHA-3~\cite{nist2015sha3} in practice. Each identity $v$ holds a
	long-term key pair $(\mathsf{sk}_v, \mathsf{pk}_v)$ with $\mathsf{pk}_v$
	registered with the system. The registration mechanism is external
	to \BPC\ and may be any identity management system---Proof-of-Personhood
	protocols~\cite{borge2017proofofpersonhood}, institutional enrollment,
	or payment-gated admission. \BPC\ does not require registration to be
	costly: its linear-cost guarantee governs the \emph{sustained
		participation} cost regardless of how cheap identity creation is. An
	adversary registering $s$ identities incurs the deploying system's
	per-identity admission cost, which is orthogonal to and additive with
	the per-window participation cost $C(s,T)$ enforced by \BPC.
	
	At the start of each window $d$, a fresh nonce
	$\eta_d \in \{0,1\}^\lambda$ is drawn from a public randomness
	beacon~\cite{syta2017drand} and publicly broadcast. The nonce is
	unpredictable before window $d$ opens and publicly verifiable thereafter.
	
	\subsection{Challenge Generation}
	
	The binding tag alone does not make the \emph{content} of a challenge
	fresh: if the task payload $\phi_{v,d,j}$ were drawn from a source
	independent of $\eta_d$---a pre-generated pool, say---an adversary
	could solve payloads before window $d$ opens even though the tag
	itself remains unpredictable, defeating the intent of~(H2). To rule
	this out, both the tag and the payload are derived from one shared,
	window-fresh seed. For identity $v$, window $d$, and index
	$j \in \{1,\ldots,k(d)\}$:
	\begin{align}
		\mathrm{seed}_{v,d,j} &\;=\;
		\mathcal{H}(\mathsf{pk}_v \;\|\; d \;\|\; j \;\|\; \eta_d),
		\label{eq:seed-gen}
		\\[2pt]
		(\phi_{v,d,j},\, \mathsf{aux}_{v,d,j}) &\;\longleftarrow\;
		\mathsf{Gen}_{\mathcal{C}}(\mathrm{seed}_{v,d,j}),
		\label{eq:challenge-gen}
	\end{align}
	where $\mathsf{Gen}_{\mathcal{C}}$ is the (possibly randomized-looking
	but seed-deterministic) instance generator of challenge family
	$\mathcal{C}$ from Section~\ref{sec:challenges}. $\mathrm{seed}_{v,d,j}$
	is exactly the binding tag recomputed in
	Algorithm~\ref{alg:verify}---no new value is introduced. Because
	$\phi_{v,d,j}$ is a deterministic function of that same unpredictable
	$\eta_d$, freshness of the tag under~(H2) carries over to freshness
	of the challenge content---an adversary cannot precompute
	$\phi_{v,d,j}$ any earlier than it can precompute the tag.
	$\mathsf{aux}_{v,d,j}$ denotes any auxiliary data the generator
	produces alongside the payload (e.g., which candidate is correct);
	whether $\mathsf{aux}_{v,d,j}$ is exposed to the participant is a
	property of the family, discussed next.
	
	\noindent\textbf{Verification is public, by design.} $\mathsf{CheckTask}$
	in Algorithm~\ref{alg:verify} takes only the public payload
	$\phi_{v,d,j}$ and the submitted response $r_{v,d,j}$---never a
	secret answer key held only by the verifier. This is a deliberate
	consequence of \BPC's solver-agnostic scope
	(Section~\ref{sec:h4-solver}): the security argument never rests on
	a challenge being hard to solve, only on a response being
	identity-bound, fresh, and rate-limited, so nothing is lost by making
	correctness fully publicly checkable. This does mean that admissible
	challenge families (Section~\ref{sec:challenges}) must be ones whose
	correctness predicate is public-computable from $(\phi_{v,d,j},
	r_{v,d,j})$ alone---a family that instead required a
	verifier-only secret to check responses would fall outside the
	scope of this construction and is not considered here.
	
	\subsection{Response and Verification}
	
	A participant submits:
	\[
	\rho \;=\; \bigl(\sigma_v,\; r_{v,d,j}\bigr),
	\]
	where $r_{v,d,j} \in \mathcal{R}$ is the task solution and
	$\sigma_v = \mathsf{Sign}(\mathsf{sk}_v,\;
	\mathsf{pk}_v \;\|\; d \;\|\; j \;\|\; r_{v,d,j})$.
	Algorithm~\ref{alg:verify} gives the complete verification procedure.
	
	\begin{algorithm}[t]
		\caption{$\BPC.\mathsf{Verify}(v,\, d,\, j,\, \rho,\, t_{\mathrm{recv}})$}
		\label{alg:verify}
		\begin{algorithmic}[1]
			\REQUIRE Identity $v$, window $d$, index $j$,
			response $\rho = (\sigma_v, r_{v,d,j})$,
			receipt time $t_{\mathrm{recv}}$
			\ENSURE \textbf{Accept} or \textbf{Reject}
			\IF{$t_{\mathrm{recv}} > t_{\mathrm{issue}} + \Delta_{\mathrm{resp}}$}
			\RETURN \textbf{Reject} \hfill\COMMENT{Deadline exceeded --- enforces (H3)}
			\ENDIF
			\IF{$\lnot\;\mathsf{VerifySig}\!\bigl(\mathsf{pk}_v,\;
				\mathsf{pk}_v \;\|\; d \;\|\; j \;\|\; r_{v,d,j},\; \sigma_v\bigr)$}
			\RETURN \textbf{Reject} \hfill\COMMENT{Invalid signature --- enforces (H1)}
			\ENDIF
			\STATE Recompute binding tag:
			$\tau \leftarrow \mathcal{H}(\mathsf{pk}_v \;\|\; d \;\|\; j \;\|\; \eta_d)$
			\IF{binding tag in $\chi_{v,d,j}$ $\neq$ $\tau$}
			\RETURN \textbf{Reject} \hfill\COMMENT{Wrong identity/window --- enforces (H1)}
			\ENDIF
			\IF{$\lnot\;\mathsf{CheckTask}(r_{v,d,j},\; \phi_{v,d,j})$}
			\RETURN \textbf{Reject} \hfill\COMMENT{Incorrect task solution}
			\ENDIF
			\RETURN \textbf{Accept}
		\end{algorithmic}
	\end{algorithm}
	
	\subsection{Security of the Construction}
	
	We establish each property against the adversary model of
	Section~\ref{sec:model}. Properties~(H1)--(H2) hold against the
	PPT adversary $\mathcal{A}_{\mathrm{crypto}}$; (H3) holds
	unconditionally; (H4) follows from
	Assumption~\ref{asm:throughput} against any adversary.
	
	\noindent\textbf{Identity binding~(H1).}
	Against a PPT adversary $\mathcal{A}_{\mathrm{crypto}}$, the
	binding tag $\mathcal{H}(\mathsf{pk}_v \;\|\; d \;\|\; j \;\|\;
	\eta_d)$ ties the challenge to $(v,d,j)$. A valid response
	includes a signature under $\mathsf{sk}_v$; producing a valid
	response for $v' \neq v$ requires either forging a signature under
	$\mathsf{sk}_v$---infeasible under EUF-CMA---or finding a hash
	collision---infeasible in the random oracle
	model~\cite{bellare1993random}.
	
	\noindent\textbf{Freshness~(H2).}
	Against a PPT adversary $\mathcal{A}_{\mathrm{crypto}}$, the nonce
	$\eta_d$ is revealed only at the start of window $d$. Because $\eta_d$
	is hidden prior to window $d$, the binding tag
	$\mathcal{H}(\mathsf{pk}_v \|\, d \|\, j \|\, \eta_d)$ is
	computationally indistinguishable from uniform before $d$ opens.
	No PPT adversary can compute $\chi_{v,d,j}$ before $\eta_d$ is
	known, ruling out precomputation in the random oracle
	model~\cite{bellare1993random}.
	
	\noindent\textbf{Real-time constraint~(H3).}
	This property holds \emph{unconditionally}, independently of
	adversarial computational power. Line~1 of
	Algorithm~\ref{alg:verify} checks the receipt timestamp
	deterministically and independently of the task solution.
	Late responses are rejected regardless of the adversary's
	computational capabilities.
	
	\noindent\textbf{Throughput bound~(H4).}
	The cryptographic construction does not enforce~(H4); this property
	is captured by Assumption~\ref{asm:throughput} and arises from the
	latency structure of the participation channel. Each valid response
	requires: (i)~receiving $\chi_{v,d,j}$ after its issuance;
	(ii)~computing $r_{v,d,j}$ and $\sigma_v$; (iii)~transmitting $\rho$
	and having it received within $\Delta_{\mathrm{resp}}$. A
	deployment-specific lower bound $\underline{\delta}_{\mathrm{rtt}}$ on
	this round-trip time (Assumption~\ref{asm:throughput}) limits any
	channel to at most
	$\tau_h = \lfloor\Delta_{\mathrm{resp}}/\underline{\delta}_{\mathrm{rtt}}\rfloor$
	responses per window. For human channels, cognitive processing time
	contributes to this floor. The security results of
	Sections~\ref{sec:cost}--\ref{sec:multiwindow} are stated conditionally
	on Assumption~\ref{asm:throughput}; empirical characterization of
	$\tau_h$ for representative challenge families is given in
	Section~\ref{sec:eval}.

	\section{System and Adversary Model}
	\label{sec:model}
	
	\subsection{System Model}
	
	The system consists of a set of identities $\mathcal{I}$ interacting with
	a platform relying on \BPC\ for continuous verification. Time is divided
	into windows $d \in \mathbb{N}$ of fixed duration $\Delta_h$.
	
	For identity $v$ and window $d$, let $\pi(v,d) \in \{0,1\}$ denote the
	\emph{participation bit}: $\pi(v,d)=1$ iff $v$ performs at least one
	valid qualifying action during $d$, and $\pi(v,d)=0$ otherwise. An
	identity $v$ is \emph{active} in window $d$ iff $\pi(v,d)=1$.
	\BPC's internal state supports only the cryptographic bookkeeping
	required for freshness---nonces and window indices (H2)---and does
	not accumulate $\pi(v,\cdot)$ into a score, reputation, or any other
	persistent participation record. \BPC\ exports the sequence
	$\pi(v,1),\pi(v,2),\ldots$ as its sole output; interpreting or
	aggregating this sequence for reputation, voting weight, or
	consensus influence is the responsibility of the higher-level
	system that consumes it, not of \BPC\ itself. Verification is
	deterministic, public, and requires no long-term storage of
	participant data.
	
	\subsection{Adversary Model}
	
	We consider a two-layer adversary model reflecting the two-layer
	structure of \BPC's security guarantees.
	
	\noindent\textbf{Cryptographic layer~((H1)--(H3)).}
	For the purposes of the construction in Section~\ref{sec:construction},
	we consider a probabilistic polynomial-time~(PPT) adversary
	$\mathcal{A}_{\mathrm{crypto}}$ bounded by the security parameter
	$\lambda$. Properties~(H1) and~(H2) hold against
	$\mathcal{A}_{\mathrm{crypto}}$ under the random oracle model and
	existential unforgeability of the signature scheme~(EUF-CMA).
	Property~(H3) holds unconditionally: the deadline check in
	Algorithm~\ref{alg:verify} is deterministic and requires no
	computational assumption.
	
	\noindent\textbf{Structural layer~((H4)).}
	For the cost analysis of
	Sections~\ref{sec:cost}--\ref{sec:multiwindow}, we consider a
	computationally unbounded adversary $\mathcal{A}$ that may:
	\begin{itemize}
		\item create and coordinate an arbitrary number of identities $s$;
		\item operate up to $m$ independent participation channels (human or
		automated) within any window, with $m$ adaptively varying across windows;
		\item employ outsourcing, relay attacks, AI automation, or any combination;
		\item deviate arbitrarily from the protocol, subject to
		Assumption~\ref{asm:throughput}~(H4).
	\end{itemize}
	The cost results of Sections~\ref{sec:cost}--\ref{sec:multiwindow}
	hold unconditionally given Assumption~\ref{asm:throughput},
	independently of adversarial computational power.
	
	\subsection{Adversarial Cost Function}
	
	\begin{definition}[Adversarial cost]
		\label{def:cost}
		$C(s, T)$ denotes the minimum number of independent participation
		channel-windows required to sustain $s$ active identities over $T$
		consecutive windows.
	\end{definition}
	
	\begin{assumption}[Channel Throughput Bound]
		\label{asm:throughput}
		Let $\underline{\delta}_{\mathrm{rtt}} > 0$ denote a
		\emph{deployment-specific lower bound} on the time required by a
		single participation channel to complete one valid
		challenge-response interaction under the deployed channel
		model---a round trip consisting of challenge delivery, response
		computation, and receipt acknowledgement---such that no channel
		admitted by that model can complete a round trip faster than
		$\underline{\delta}_{\mathrm{rtt}}$. Any single participation
		channel then produces at most
		\[
		\tau_h \;=\; \left\lfloor \frac{\Delta_{\mathrm{resp}}}{\underline{\delta}_{\mathrm{rtt}}} \right\rfloor
		\]
		valid responses per window, independently of the number of
		identities associated with that channel. This is a structural
		claim about the channel model, not a statistical one:
		$\underline{\delta}_{\mathrm{rtt}}$ must hold against the
		\emph{fastest} channel the deployment's threat model admits, not
		merely against a typical one, so it cannot simply be read off as
		an average of observed latencies. Section~\ref{sec:eval}
		estimates $\underline{\delta}_{\mathrm{rtt}}$ conservatively from
		observed latency distributions rather than instantiating it with
		the sample mean; the gap between the two, and the risk of a real
		deployment's fastest achievable channel undercutting whatever
		value of $\underline{\delta}_{\mathrm{rtt}}$ is adopted, is
		treated explicitly as a deployment consideration in
		Section~\ref{sec:limitations} rather than as a hidden gap.
	\end{assumption}
	
	\begin{remark}[Three capacity regimes]
		\label{rem:tau-regimes}
		The quantity $\tau_h$ in Assumption~\ref{asm:throughput} names one
		of three structurally distinct capacity regimes; every other
		occurrence of $\tau_h$ in this paper refers to the first.
		\begin{itemize}
			\item \textbf{Synchronous deadline capacity},
			$\tau_h^{\mathrm{sync}} = \lfloor
			\Delta_{\mathrm{resp}}/\underline{\delta}_{\mathrm{rtt}}
			\rfloor$: the number of sequential challenge-response round
			trips a single channel can complete before a shared deadline
			$\Delta_{\mathrm{resp}}$ closes. This is the regime instantiated
			by the browser-session and API solver channel models of
			Table~\ref{tab:channel-semantics}, the one used in
			Theorem~\ref{thm:joint-linear} and
			Corollary~\ref{cor:no-sublinear}, and the one measured
			empirically in Section~\ref{sec:eval}.
			\item \textbf{Window-service capacity}, $\tau_h^{\mathrm{win}} =
			\lfloor \Delta_h/\delta_{\mathrm{service}} \rfloor$, where
			$\Delta_h$ is the full window length rather than a single
			response deadline and $\delta_{\mathrm{service}}$ is the time a
			channel takes to service one identity's request within it: the
			regime for a channel that stays continuously open across the
			whole window rather than answering one burst deadline.
			$\tau_h^{\mathrm{win}}$ can exceed $\tau_h^{\mathrm{sync}}$ when
			$\Delta_h \gg \Delta_{\mathrm{resp}}$; a deployment operating in
			this regime re-derives Corollary~\ref{cor:no-sublinear} with
			$\tau_h^{\mathrm{win}}$ in place of $\tau_h^{\mathrm{sync}}$---the
			algebra is unchanged, only the capacity value differs.
			\item \textbf{Exclusive-channel capacity}, $\tau_h = 1$: the
			degenerate case in which the enforcement mechanism itself, not
			response latency, limits a channel to one identity per window
			regardless of $\Delta_{\mathrm{resp}}$ or
			$\underline{\delta}_{\mathrm{rtt}}$. This is the regime
			instantiated by the device-bound and UI-occupancy channel
			models of Table~\ref{tab:channel-semantics}: a single attested
			device or a single continuously occupied interactive surface
			cannot serve a second identity within the same window at all,
			so the round-trip formula above neither applies nor is needed---%
			$\tau_h = 1$ holds by construction.
		\end{itemize}
		All results in Sections~\ref{sec:cost}--\ref{sec:multiwindow} are
		proved for a generic $\tau_h$ and hold under any of the three
		regimes; only the numeric value, and for the window-service regime
		the substituted formula, changes.
	\end{remark}
	
	\begin{remark}[Bases for Assumption~\ref{asm:throughput}]
		\label{rem:h4-bases}
		Assumption~\ref{asm:throughput} can be grounded in two
		structurally distinct mechanisms.
		\emph{(i)~Cognitive serialization:} for human channels,
		cognitive and motor processing time sets
		$\underline{\delta}_{\mathrm{rtt}} \geq \delta_{\mathrm{cog}}$, where
		$\delta_{\mathrm{cog}}$ is the irreducible human response latency
		for the deployed challenge family (empirically characterized
		in Section~\ref{sec:eval}).
		\emph{(ii)~Channel serialization:} for automated channels,
		$\underline{\delta}_{\mathrm{rtt}}$ is bounded below by network
		propagation delay plus inference latency. Measured under standard
		commercial inference APIs (Section~\ref{sec:eval}), current
		frontier VLMs incur mean latencies of $2.25$--$3.57\,\mathrm{s}$ on
		multi-image perceptual tasks and $0.95$--$1.79\,\mathrm{s}$ on
		text-based reasoning tasks; treating these observed means as an
		illustrative (not conservative) reference point yields $\tau_h \in
		\{2,3\}$ for $\Delta_{\mathrm{resp}} = 8\,\mathrm{s}$ on perceptual
		challenges and $\tau_h \in \{6,7,12\}$ for
		$\Delta_{\mathrm{resp}} = 12\,\mathrm{s}$ on reasoning challenges;
		Section~\ref{sec:eval-results} discusses conservative calibration
		of $\underline{\delta}_{\mathrm{rtt}}$ itself.
		Either basis is sufficient for the security argument of
		Sections~\ref{sec:cost}--\ref{sec:multiwindow}.
	\end{remark}
	
	\section{Sybil Cost Analysis}
	\label{sec:cost}
	
	Figure~\ref{fig:cost-scaling} previews where this section's result
	(and Section~\ref{sec:multiwindow}'s multi-window extension of it)
	sits relative to the mechanisms surveyed in Section~\ref{sec:related}.
	
	\begin{figure}[t]
		\centering
		\begin{tikzpicture}[font=\footnotesize]
			\begin{scope}
				\draw[->] (0,0) -- (4.6,0) node[right] {identities $s$};
				\draw[->] (0,0) -- (0,3.2) node[above] {adversarial cost $C(s,T)$};
				
				\fill[red!25, opacity=0.6]
				(0,0) -- (4.2,2.6) -- plot[domain=4.2:0, samples=60]
				(\x, {1.05*ln(\x+1)}) -- cycle;
				\draw[red!60!black, thick, dashed] (0,0) -- (4.2,2.6);
				\draw[red!60!black, thick, dotted]
				plot[domain=0:4.2, samples=60]
				(\x, {1.05*ln(\x+1)});
				\node[right, font=\scriptsize, red!60!black, align=left]
				at (4.25,1.5)
				{PoW/PoS: $\Omega(s)$ \emph{or} $o(s)$\\
					(depends on carrying cost)};
				
				\draw[blue!70!black, very thick] (0,0) -- (4.2,3.0)
				node[right, font=\scriptsize, blue!70!black]
				{\textbf{BPC}: $\Omega(s)$ per window};
				
				\draw[orange!70!black, thick, dotted] (0,0.9) -- (4.2,0.9)
				node[right, font=\scriptsize, orange!70!black]
				{CAPTCHA/PoP: $O(1)$};
				
				\node[font=\scriptsize\itshape, gray] at (2.1,-0.45)
				{(fixed time horizon $T$)};
			\end{scope}
		\end{tikzpicture}
		\caption{Schematic adversarial cost $C(s,T)$ as a function of the
			number of sustained identities $s$ (at fixed $T$). \BPC\ enforces
			linear scaling $\Omega(s)$ per window unconditionally, from a
			structural channel constraint alone. Resource-based mechanisms
			such as PoW/PoS occupy the shaded band: their regime is
			$\Omega(s)$ or $o(s)$ depending on a separate, explicit economic
			assumption about the cost of sustaining the resource, not on
			reusability or parallelizability alone~\cite{maleki2026geometry};
			one-time mechanisms (CAPTCHA, PoP) incur constant cost after
			creation regardless of $s$.}
		\label{fig:cost-scaling}
	\end{figure}
	
	\begin{lemma}[Per-window identity bound]
		\label{lem:identity-bound}
		Under Assumption~\ref{asm:throughput}, in any window $d$, an adversary
		operating $m$ participation channels can produce valid \BPC\ responses
		for at most
		\[
		m \cdot \tau_h \;=\; m \cdot
		\left\lfloor\frac{\Delta_{\mathrm{resp}}}{\underline{\delta}_{\mathrm{rtt}}}\right\rfloor
		\]
		distinct identities.
	\end{lemma}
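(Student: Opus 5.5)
The plan is a counting (pigeonhole) argument: charge every valid response in window $d$ to the channel that produced it, then bound the total using Assumption~\ref{asm:throughput}.

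First, fix window $d$ and let $V_d$ be the set of distinct identities $v$ for which the adversary obtains at least one accepted response, i.e.\ $\pi(v,d)=1$. For each $v\in V_d$, choose one accepted response $\rho_v$, say the earliest. This gives an injective map from $V_d$ into the set of valid responses produced in window $d$. The map is injective because of (H1): by the identity-binding argument for the construction, a response accepted by Algorithm~\ref{alg:verify} for $(v,d,j)$ carries a signature under $\mathsf{sk}_v$ over $\mathsf{pk}_v\|d\|j\|r$, together with the tag $\mathcal{H}(\mathsf{pk}_v\|d\|j\|\eta_d)$. Such a response cannot also be accepted for any $v'\neq v$, so no single response is credited to two identities. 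By (H2), responses computed in an earlier window cannot be reused, so every response counted was produced within window $d$ itself. Hence
\[
|V_d| \;\le\; \#\{\text{valid responses produced in } d\}.
\]

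Second, partition those responses by the channel that produced them; the adversary operates channels $1,\ldots,m$. Assumption~\ref{asm:throughput} says each channel completes at most $\tau_h=\lfloor\Delta_{\mathrm{resp}}/\underline{\delta}_{\mathrm{rtt}}\rfloor$ valid responses per window, independent of how many identities it serves. This is the sequential-reuse reading discussed in Section~\ref{sec:h4-solver}: each valid response needs a full round trip of length at least $\underline{\delta}_{\mathrm{rtt}}$, and by (H3) it must be received within $\Delta_{\mathrm{resp}}$. Summing over the $m$ channels gives at most $m\tau_h$ responses, and therefore $|V_d|\le m\tau_h$.

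The main obstacle is making the attribution of responses to channels rigorous. The argument needs every valid response to originate from exactly one of the $m$ channels, so that it cannot be split or relayed in a way that escapes the per-channel count. I would handle this by invoking the adversary model directly: the adversary operates up to $m$ channels, and Assumption~\ref{asm:throughput} is stated for any channel the deployment admits, with relaying and outsourcing counted as additional channels. Since the bound is an assumption taken against an unbounded adversary, the lemma then holds with no computational hypothesis beyond the PPT guarantee already needed for (H1)--(H2).
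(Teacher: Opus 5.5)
Your proposal is correct and is the same counting argument the paper uses: Assumption~\ref{asm:throughput} caps each of the $m$ channels at $\tau_h$ valid responses per window, (H1) ensures no single response is credited to two identities, and (H2)--(H3) exclude precomputed or late responses, so at most $m\tau_h$ distinct identities can be covered. Your explicit injection $v \mapsto \rho_v$ and your attribution of each response to exactly one of the $m$ channels spell out steps that the paper's proof leaves implicit.
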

	
	\begin{proof}
		By Assumption~\ref{asm:throughput}~(H4), each of $m$ channels produces
		at most $\tau_h = \lfloor\Delta_{\mathrm{resp}}/\underline{\delta}_{\mathrm{rtt}}\rfloor$
		valid responses per window. By~(H1), a response for identity $v$ cannot
		be credited to $v' \neq v$; responses are non-sharable across identities.
		By~(H2) and~(H3), precomputed or late responses are rejected.
		The total valid responses attributable to distinct identities is
		therefore at most $m \cdot \tau_h$.
	\end{proof}
	
	\begin{theorem}[Linear Sybil cost]
		\label{thm:linear-cost}
		Under~(H1)--(H3) and Assumption~\ref{asm:throughput}~(H4), sustaining
		$s$ active identities in any single window requires
		\[
		m \;\geq\; \left\lceil\frac{s}{\tau_h}\right\rceil
		\;=\; \left\lceil s \cdot
		\frac{\underline{\delta}_{\mathrm{rtt}}}{\Delta_{\mathrm{resp}}}\right\rceil
		\]
		independent participation channels. Equivalently,
		$C(s,1) \geq s/\tau_h = s \cdot \underline{\delta}_{\mathrm{rtt}}/\Delta_{\mathrm{resp}}$.
	\end{theorem}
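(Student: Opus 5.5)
The plan is to derive Theorem~\ref{thm:linear-cost} as a direct counting consequence of Lemma~\ref{lem:identity-bound}, combined with the definition of activity and of $C(s,T)$ (Definition~\ref{def:cost}).

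\textbf{Step 1: what the adversary must achieve.} Fix a window $d$ and suppose the adversary sustains $s$ active identities $v_1,\ldots,v_s$ in it. By the system model, each $v_i$ has $\pi(v_i,d)=1$. So for each $i$ there is at least one valid qualifying action, namely a response $\rho_i$ that Algorithm~\ref{alg:verify} accepts for $(v_i,d,j_i)$ for some index $j_i$.

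\textbf{Step 2: these responses are distinct and each is produced by some channel.} By~(H1), a response accepted for $v_i$ cannot also be accepted for $v_{i'}\neq v_i$, since the signature and binding tag fix $\mathsf{pk}_{v_i}$. Hence the map $i\mapsto\rho_i$ is injective. By~(H2) and~(H3), none of these responses can be precomputed before $\eta_d$ or delivered after the deadline. Each $\rho_i$ must therefore be produced within window $d$ by one of the $m$ channels the adversary operates in that window. Lemma~\ref{lem:identity-bound} then gives $s\le m\tau_h$.

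\textbf{Step 3: rearrange and take the minimum.} Since $m$ is an integer, $s\le m\tau_h$ yields $m\ge\lceil s/\tau_h\rceil$. Substituting $\tau_h=\lfloor\Delta_{\mathrm{resp}}/\underline{\delta}_{\mathrm{rtt}}\rfloor$ gives the displayed form. One caution: the paper writes the equality $\lceil s/\tau_h\rceil=\lceil s\,\underline{\delta}_{\mathrm{rtt}}/\Delta_{\mathrm{resp}}\rceil$, but this is only exact when $\underline{\delta}_{\mathrm{rtt}}$ divides $\Delta_{\mathrm{resp}}$. In general $\tau_h\le\Delta_{\mathrm{resp}}/\underline{\delta}_{\mathrm{rtt}}$, so $s/\tau_h\ge s\,\underline{\delta}_{\mathrm{rtt}}/\Delta_{\mathrm{resp}}$. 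I would state the bound with $\tau_h$ and note that the floor-free expression is a valid, weaker lower bound. This also requires $\tau_h\ge1$, i.e.\ $\underline{\delta}_{\mathrm{rtt}}\le\Delta_{\mathrm{resp}}$; I would assume this explicitly, since if $\tau_h=0$ no identity can be sustained at all and the bound holds vacuously. Finally, $C(s,1)$ is the minimum number of channel-windows over all strategies. With $T=1$ that number is $m$, so minimizing over strategies gives $C(s,1)\ge\lceil s/\tau_h\rceil\ge s/\tau_h$.

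\textbf{Main obstacle.} The delicate point is the step that identifies ``$m$ independent participation channels'' with ``channel-windows'' in Definition~\ref{def:cost}. One must rule out an adversary that claims its responses came from something other than the $m$ counted channels, such as relayed or outsourced solvers. I would handle this by treating every entity that produces a response within the window as a channel under the adversary model, which explicitly includes outsourcing and relay. Assumption~\ref{asm:throughput} then caps each such entity at $\tau_h$, and the count is complete.
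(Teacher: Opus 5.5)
Your proposal is correct and follows the paper's own argument. Lemma~\ref{lem:identity-bound} gives $s \le m\tau_h$, rearranging (using that $m$ is an integer) gives the channel bound, and Definition~\ref{def:cost} identifies $C(s,1)$ with the minimal such $m$; your caution about the floor is also well placed, because the paper's proof writes $s/\tau_h = s\,\underline{\delta}_{\mathrm{rtt}}/\Delta_{\mathrm{resp}}$ as an equality that holds only when $\Delta_{\mathrm{resp}}/\underline{\delta}_{\mathrm{rtt}}$ is an integer, so stating the bound in terms of $\tau_h$, with the floor-free expression as a weaker consequence and $\tau_h \ge 1$ made explicit, is the accurate form.
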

	
	\begin{proof}
		From Lemma~\ref{lem:identity-bound}, $m\tau_h \ge s$, giving
		$m \ge s/\tau_h = s \cdot \underline{\delta}_{\mathrm{rtt}}/\Delta_{\mathrm{resp}}$.
		Since $\tau_h = \lfloor\Delta_{\mathrm{resp}}/\underline{\delta}_{\mathrm{rtt}}\rfloor$
		is fixed under Assumption~\ref{asm:throughput}, this gives $m = \Omega(s)$.
		By Definition~\ref{def:cost}, $C(s,1)$ is exactly this minimum
		channel count $m$, so $C(s,1) \geq s/\tau_h$.
	\end{proof}
	
	\begin{remark}[On the simplicity of this reduction]
		The proof above is a direct counting argument, not a computational
		reduction, and this is intentional rather than a gap. Once~(H1)--(H4)
		are granted, linear cost follows by pigeonhole; the same is true of
		many security reductions once the underlying hardness assumption is
		granted---IND-CPA security of ElGamal follows quickly from DDH, for
		instance. The intellectual content of such results lies in identifying
		a minimal, achievable assumption and showing it implies the target
		guarantee, not in the length of the derivation. Here, the analogous
		work is threefold: showing that~(H1)--(H4) are \emph{sufficient}
		for linear cost, while making explicit the role each property
		plays in the bound---(H1) rules out crediting one response to
		multiple identities, (H2)--(H3) rule out precomputed or late
		responses, and (H4) caps per-channel throughput---and situating
		this combination against prior mechanisms that achieve at most a
		subset of it (Section~\ref{sec:related}); constructing~(H1)--(H3)
		from standard cryptographic primitives (Section~\ref{sec:construction});
		and empirically establishing that~(H4) actually holds, with small
		$\tau_h$, for realistic channels---which was not obvious in advance
		and is where this paper's claims could have failed
		(Section~\ref{sec:eval}). We do not claim~(H1)--(H4) are
		individually \emph{necessary} for linear cost---a different
		mechanism might achieve the same scaling via a different property
		combination---only that this combination suffices and that no
		prior mechanism surveyed in Section~\ref{sec:related} attains it
		with fewer properties. The theorem is simple; whether its
		hypothesis is true of the physical world is not, and that is the
		question the empirical evaluation is designed to answer.
	\end{remark}
	
	\begin{corollary}[No sublinear amortization]
		\label{cor:no-sublinear}
		No adversary can sustain $s$ identities with $o(s)$ channels per window,
		regardless of the value of $\tau_h$.
	\end{corollary}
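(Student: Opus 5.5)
The plan is to derive the corollary directly from Theorem~\ref{thm:linear-cost} by contradiction. Suppose some adversary sustains $s$ active identities in a window using $m(s)=o(s)$ channels, meaning $m(s)/s\to 0$ as $s\to\infty$. Theorem~\ref{thm:linear-cost} gives $m(s)\ge \lceil s/\tau_h\rceil \ge s/\tau_h$. This yields $m(s)/s\ge 1/\tau_h$ for every $s$.

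The key point is that $\tau_h$ is a fixed positive integer under Assumption~\ref{asm:throughput}. It is determined by the deployment parameters $\Delta_{\mathrm{resp}}$ and $\underline{\delta}_{\mathrm{rtt}}$, or equals $1$ in the exclusive-channel regime of Remark~\ref{rem:tau-regimes}. It does not depend on $s$ or on the adversary. Hence $1/\tau_h$ is a positive constant, and $\liminf_{s\to\infty} m(s)/s\ge 1/\tau_h>0$. This contradicts $m(s)/s\to 0$. I will also note that the same algebra goes through with $\tau_h^{\mathrm{win}}$ substituted, as the remark anticipates.

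To handle the phrase ``regardless of the value of $\tau_h$,'' I will stress that the conclusion holds for every finite $\tau_h\ge 1$: any finite value only changes the constant in $\Omega(s)$, never the linear growth rate. Two edge cases need explicit treatment.
\begin{itemize}
\item If $\tau_h=0$, i.e.\ $\Delta_{\mathrm{resp}}<\underline{\delta}_{\mathrm{rtt}}$, then by Lemma~\ref{lem:identity-bound} no identity can be sustained at all, so the claim holds vacuously.
\item Adaptive choices of $m$ across windows do not matter, since the bound holds window by window.
\end{itemize}

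The main obstacle is interpretive rather than technical. The corollary is only meaningful if $\tau_h$ is treated as independent of $s$. If an adversary could make $\tau_h$ grow with $s$, for example through faster channels as it scales, then sublinearity could reappear. I will therefore state explicitly that Assumption~\ref{asm:throughput} fixes $\underline{\delta}_{\mathrm{rtt}}$ against the fastest admissible channel. This is what makes $\tau_h$ an adversary-independent constant, and the contradiction argument relies on that fact.
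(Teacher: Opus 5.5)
Your proposal is correct and follows essentially the paper's argument: the paper re-derives $m \ge \lceil s/\tau_h\rceil$ directly from Lemma~\ref{lem:identity-bound} together with (H1), you cite Theorem~\ref{thm:linear-cost} (which rests on that same lemma), and both conclude $\lceil s/\tau_h\rceil=\Omega(s)$ from the fact that $\tau_h$ is a fixed positive constant. Your explicit $\liminf$ contradiction and your treatment of the degenerate case $\tau_h=0$ (which the paper silently excludes by calling $\tau_h$ positive) are slightly more careful, but they do not change the route.
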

	
	\begin{proof}
		By Lemma~\ref{lem:identity-bound}, each of $m$ channels produces valid
		responses for at most $\tau_h$ distinct identities per window~(H4).
		Property~(H1) ensures that a response for identity $v$ cannot be
		credited to $v' \neq v$; responses are non-transferable. Therefore
		$m$ channels cover at most $m\tau_h$ identities, giving
		$m \geq \lceil s/\tau_h \rceil$. Since $\tau_h$ is a fixed positive
		constant under Assumption~\ref{asm:throughput}, $\lceil s/\tau_h
		\rceil = \Omega(s)$, so no adversary can sustain $s$ identities with
		$o(s)$ channels per window regardless of the value of $\tau_h$.
	\end{proof}
	
	\begin{remark}[Structural vs.\ economic amortization]
		Corollary~\ref{cor:no-sublinear} rules out amortization in a
		\emph{structural} sense: no adversary, regardless of scale, can
		cover more identities per channel as $s$ grows---the
		channel-to-identity ratio never improves with size. It does not by
		itself imply that the \emph{dollar} cost of operating those
		channels resists amortization: bulk pricing or labor-market
		economies of scale could still drive down the cost per channel as
		an adversary scales. That gap is closed only under the separate,
		explicit Assumption~\ref{asm:opcost} and Corollary~\ref{cor:concrete}
		below, which convert this structural count into an economic floor.
	\end{remark}

	\section{Multi-Window and Long-Term Analysis}
	\label{sec:multiwindow}
	
	Let $s_d$ and $m_d$ denote the number of active identities and channels,
	respectively, in window $d$.
	
	\begin{lemma}[Window persistence bound]
		\label{lem:persistence}
		Under Assumption~\ref{asm:throughput},
		$s_d \le m_d \cdot \tau_h = m_d \cdot
		\lfloor\Delta_{\mathrm{resp}}/\underline{\delta}_{\mathrm{rtt}}\rfloor$
		for every window $d$.
	\end{lemma}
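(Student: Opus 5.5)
The plan is to reduce Lemma~\ref{lem:persistence} window by window to Lemma~\ref{lem:identity-bound}. Fix an arbitrary window $d$. By the system model, $s_d$ counts the identities $v$ with $\pi(v,d)=1$, and each such identity needs at least one valid qualifying action in window $d$. In the construction of Section~\ref{sec:construction}, that action is a response $\rho$ accepted by Algorithm~\ref{alg:verify} for some index $j$.

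The argument then proceeds in four steps.

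\textbf{Step 1 (charging).} Assign to each active identity $v$ one witnessing accepted response $\rho_v$ for window $d$. Using (H1), argue that the map $v \mapsto \rho_v$ is injective. A response carries a signature over $\mathsf{pk}_v \,\|\, d \,\|\, j \,\|\, r_{v,d,j}$ and a binding tag recomputed from $\mathsf{pk}_v$, so no single response can be accepted for two distinct identities.

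\textbf{Step 2 (restricting the window).} Using (H2) and (H3), argue that every $\rho_v$ was actually produced during window $d$ by one of the $m_d$ channels the adversary operates in that window. A response computed before $\eta_d$ is known, or carried over from window $d-1$, fails the binding-tag check. A late response fails the deadline check in line~1.

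\textbf{Step 3 (capacity).} By Assumption~\ref{asm:throughput}, each of the $m_d$ channels yields at most $\tau_h$ valid responses in window $d$, so the witnesses number at most $m_d\tau_h$. This is exactly the instantiation of Lemma~\ref{lem:identity-bound} with $m=m_d$. Injectivity from Step~1 then gives $s_d \le m_d \tau_h$.

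\textbf{Step 4 (generalizing).} Since $d$ was arbitrary, the bound holds for every window. The adversary's ability to vary $m_d$ adaptively across windows changes nothing, because neither the argument nor Assumption~\ref{asm:throughput} refers to any other window.

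The only real subtlety is Step~2. One must make sure that no cross-window carry-over exists: no credit, state, or precomputed work from window $d-1$ can contribute to the count in window $d$. Otherwise the per-window bound could be circumvented by banking responses. I would resolve this by pointing out two things. First, \BPC\ keeps no accumulated participation state (Section~\ref{sec:model}). Second, the challenge seed depends on the fresh beacon value $\eta_d$. Together these make each window's accounting self-contained.

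A minor point is that the cryptographic properties (H1)--(H2) hold only against PPT adversaries, up to negligible probability. I would state that the bound holds except with negligible probability in $\lambda$, or treat (H1)--(H3) as given, in line with how Lemma~\ref{lem:identity-bound} is proved.
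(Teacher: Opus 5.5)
Your proposal is correct and follows the paper's proof, which applies Lemma~\ref{lem:identity-bound} to window $d$ (i.e., with $m=m_d$) and notes that (H2)--(H3) prevent effort from carrying across windows; your Steps~1--3 unpack that lemma's counting argument, and your Step~4 and cross-window discussion supply the same carry-over observation. One small slip that does not affect the argument: a stale answer re-signed for window $d$ is rejected by the signature check or by $\mathsf{CheckTask}$ on the fresh payload $\phi_{v,d,j}$, not by the binding-tag check, since that check concerns the verifier-issued challenge rather than the response.
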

	
	\begin{proof}
		Lemma~\ref{lem:identity-bound} applied to window $d$. Since~(H2)--(H3)
		prevent cross-window reuse, effort in window $d$ cannot carry to $d+1$.
	\end{proof}
	
	\begin{theorem}[Joint linear cost]
		\label{thm:joint-linear}
		Under~(H1)--(H3) and Assumption~\ref{asm:throughput}~(H4),
		\[
		C(s,T) \;\geq\; \frac{sT}{\tau_h}
		\;=\; sT \cdot \frac{\underline{\delta}_{\mathrm{rtt}}}{\Delta_{\mathrm{resp}}}.
		\]
	\end{theorem}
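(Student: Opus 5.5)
The proof sums the per-window bound of Lemma~\ref{lem:persistence} over the $T$ windows. The key point is that~(H2)--(H3) make the windows independent for accounting purposes, so the counts simply add.

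Fix an adversary that sustains $s$ active identities over $T$ consecutive windows $d_1,\ldots,d_T$. Sustaining means $\pi(v,d)=1$ for each of the $s$ identities $v$ and each window $d$, so $s_d \ge s$ for every $d$ in the horizon. Let $m_d$ be the number of channels the adversary operates in window $d$; Section~\ref{sec:model} allows this to vary adaptively. By Definition~\ref{def:cost}, the number of channel-windows consumed is $\sum_{d} m_d$, and $C(s,T)$ is the minimum of this sum over all adversaries achieving the target.

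\textbf{Step 1 (per-window bound).} Apply Lemma~\ref{lem:persistence} in each window: $s \le s_d \le m_d\,\tau_h$, hence $m_d \ge s/\tau_h$. This holds for any adaptive choice of $m_d$, because the lemma holds for every window independently of history.

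\textbf{Step 2 (summation).} Summing over the $T$ windows gives $\sum_d m_d \ge T\cdot s/\tau_h$. Taking the minimum over adversaries yields $C(s,T)\ge sT/\tau_h$. One could sharpen this to $T\lceil s/\tau_h\rceil$ using Theorem~\ref{thm:linear-cost}, but the stated bound suffices.

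\textbf{Step 3 (closed form).} Substitute $\tau_h=\lfloor\Delta_{\mathrm{resp}}/\underline{\delta}_{\mathrm{rtt}}\rfloor$. Since $\tau_h \le \Delta_{\mathrm{resp}}/\underline{\delta}_{\mathrm{rtt}}$, we get $sT/\tau_h \ge sT\,\underline{\delta}_{\mathrm{rtt}}/\Delta_{\mathrm{resp}}$. So the second expression in the statement should be read as a valid lower bound implied by the first; it is an equality only when the floor is exact. I will note this explicitly rather than assert equality.

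\textbf{Main obstacle: additivity.} The summation in Step 2 is legitimate only if a single channel-window cannot count toward two windows, for example a response produced late in window $d$ that also certifies window $d+1$, or effort stockpiled before the horizon. I handle this with the proof of Lemma~\ref{lem:persistence}:
\begin{itemize}
\item By~(H2), each challenge $\chi_{v,d,j}$ depends on the fresh nonce $\eta_d$, so no valid response for window $d$ exists before $d$ opens.
\item By~(H3), any response received after $t_{\mathrm{issue}}+\Delta_{\mathrm{resp}}$ is rejected.
\item By~(H1), the seed binds $d$ into both the tag and the signature, so a response verifies only for its own $(v,d,j)$.
\end{itemize}
Each valid response is therefore attributable to exactly one (identity, window) pair. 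The sets of channel-windows used in distinct windows are thus disjoint, and the per-window lower bounds add without double counting.
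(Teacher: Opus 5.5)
Your proposal is correct and follows the paper's proof: apply Lemma~\ref{lem:persistence} in each window to get $m_d \ge \lceil s/\tau_h\rceil$, use (H2)--(H3) (together with the window binding from (H1)) to argue that channel-window counts add across windows, and sum to obtain $\sum_{d=1}^{T}\lceil s/\tau_h\rceil \ge sT/\tau_h$. Your Step~3 caveat is also correct and more careful than the paper. The paper writes $sT/\tau_h = sT\cdot\underline{\delta}_{\mathrm{rtt}}/\Delta_{\mathrm{resp}}$ as an equality, but $\tau_h = \lfloor \Delta_{\mathrm{resp}}/\underline{\delta}_{\mathrm{rtt}}\rfloor$ only gives $sT/\tau_h \ge sT\cdot\underline{\delta}_{\mathrm{rtt}}/\Delta_{\mathrm{resp}}$ in general.
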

	
	\begin{proof}
		By Lemma~\ref{lem:persistence}, each window $d$ independently requires
		at least $\lceil s/\tau_h \rceil$ channel-windows. Properties~(H2)--(H3)
		ensure that valid responses from window $d$ are rejected in window
		$d+1$: each window requires fresh participation and cannot draw on
		effort invested in prior windows. The channel-window counts across
		windows are therefore additive, and the total cost is
		$\sum_{d=1}^{T}\lceil s/\tau_h \rceil \geq sT/\tau_h$.
	\end{proof}
	
	Theorem~\ref{thm:joint-linear} bounds $C(s,T)$, a \emph{structural}
	count of channel-windows (Definition~\ref{def:cost}); it makes no
	claim about currency. Converting this structural requirement into an
	economic floor takes one further, separately falsifiable step.
	
	\begin{assumption}[Uniform Per-Channel Operating Cost]
		\label{asm:opcost}
		There exists a constant $c_{\mathrm{op}} > 0$, independent of $s$
		and $T$, such that provisioning and operating each channel-window
		required by Theorem~\ref{thm:joint-linear} incurs an economic cost
		of at least $c_{\mathrm{op}}$. This is an economic claim, not a
		structural one: bulk pricing or labor-market economies of scale
		available only to a large-$s$ adversary would violate it, an
		empirical question about a given deployment addressed in
		Section~\ref{sec:limitations}, not a consequence of~(H1)--(H4).
	\end{assumption}
	
	\begin{corollary}[Concrete security bound]
		\label{cor:concrete}
		Let $C_{\mathrm{econ}}(s,T)$ denote the minimum economic cost, in
		currency, of sustaining $s$ identities over $T$ windows. Under
		Assumption~\ref{asm:opcost},
		\[
		C_{\mathrm{econ}}(s,T) \;\geq\; c_{\mathrm{op}} \cdot C(s,T) \;=\; \Omega(sT).
		\]
		For illustration (using the cross-solver mean latencies of
		Section~\ref{sec:eval} only as a convenient reference point;
		Section~\ref{sec:eval-results} gives the conservative calibration
		an actual deployment should use instead), take a perceptual
		deployment with $\Delta_{\mathrm{resp}} = 8\,\mathrm{s}$ and
		$\underline{\delta}_{\mathrm{rtt}} \approx 2.7\,\mathrm{s}$:
		$\tau_h = \lfloor 8/2.7 \rfloor = 2$, giving
		$C(s,T) \geq sT/2$ channel-windows and
		$C_{\mathrm{econ}}(s,T) \geq c_{\mathrm{op}} \cdot sT/2$.
		For a reasoning deployment with $\Delta_{\mathrm{resp}} = 12\,\mathrm{s}$
		and $\underline{\delta}_{\mathrm{rtt}} \approx 1.4\,\mathrm{s}$,
		giving $\tau_h = 8$: the bound relaxes to
		$C_{\mathrm{econ}}(s,T) \geq c_{\mathrm{op}} \cdot sT/8$. In both
		regimes $\tau_h$ is a small single-digit constant and the $\Omega(sT)$
		scaling of $C(s,T)$ is preserved unconditionally; the scaling of
		$C_{\mathrm{econ}}(s,T)$ additionally requires
		Assumption~\ref{asm:opcost}. Reducing $\Delta_{\mathrm{resp}}$
		decreases $\tau_h$ proportionally, tightening the per-channel-window
		guarantee at the cost of higher interaction burden.
	\end{corollary}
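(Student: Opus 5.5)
The plan is to derive Corollary~\ref{cor:concrete} directly from Theorem~\ref{thm:joint-linear} together with Assumption~\ref{asm:opcost}; it is a monotone rescaling of the structural count. The first step is to fix an arbitrary adversarial strategy that sustains $s$ active identities over $T$ windows. Let $N$ be the number of channel-windows it provisions, so $N = \sum_{d=1}^{T} m_d$. By Definition~\ref{def:cost} and Theorem~\ref{thm:joint-linear}, we have $N \geq C(s,T) \geq sT/\tau_h$. Assumption~\ref{asm:opcost} charges each channel-window at least $c_{\mathrm{op}}$, with $c_{\mathrm{op}}$ independent of $s$ and $T$. The total economic cost of the strategy is therefore at least $c_{\mathrm{op}} N \geq c_{\mathrm{op}} \cdot C(s,T)$. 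Taking the infimum over all strategies gives $C_{\mathrm{econ}}(s,T) \geq c_{\mathrm{op}} \cdot C(s,T)$.

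The second step is the asymptotic claim. Since $\tau_h$ is a fixed positive constant under Assumption~\ref{asm:throughput}, and $c_{\mathrm{op}}>0$ is constant by Assumption~\ref{asm:opcost}, we get $c_{\mathrm{op}} sT/\tau_h = \Omega(sT)$.

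The third step is the two numerical illustrations, which are direct substitutions. For the perceptual case, $\lfloor 8/2.7\rfloor = \lfloor 2.96\rfloor = 2$. For the reasoning case, $\lfloor 12/1.4\rfloor = \lfloor 8.57\rfloor = 8$. Plugging each value into $c_{\mathrm{op}} sT/\tau_h$ gives the stated bounds. The remark that reducing $\Delta_{\mathrm{resp}}$ tightens the bound follows from monotonicity of $\lfloor \Delta_{\mathrm{resp}}/\underline{\delta}_{\mathrm{rtt}}\rfloor$ in $\Delta_{\mathrm{resp}}$.

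The only delicate point is the first step, specifically that costs must be charged per channel-window and not per channel. A single physical channel reused across several windows counts once in each window. This is exactly how Theorem~\ref{thm:joint-linear} sums the $\lceil s/\tau_h\rceil$ term across windows, and Assumption~\ref{asm:opcost} is phrased per channel-window. Hence no amortization across windows enters, and the bound is not weakened. I would make this additivity explicit by citing Lemma~\ref{lem:persistence}: it shows that effort in window $d$ does not carry over to window $d+1$.
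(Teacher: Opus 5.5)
Your proposal is correct and takes the paper's route. The paper gives no separate proof: it treats the corollary as Theorem~\ref{thm:joint-linear}'s channel-window count multiplied by the per-channel-window floor $c_{\mathrm{op}}$ of Assumption~\ref{asm:opcost}, with the two illustrations as direct substitutions $\lfloor 8/2.7\rfloor=2$ and $\lfloor 12/1.4\rfloor=8$. One nuance on the last sentence: monotonicity of the floor gives only a weak, not exactly proportional, decrease of $\tau_h$ as $\Delta_{\mathrm{resp}}$ shrinks, and that is all the corollary's informal claim can support.
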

	
	\begin{theorem}[Steady-state Sybil capacity]
		\label{thm:steady-state}
		Under Assumption~\ref{asm:throughput}, let
		$\bar{m} = \limsup_{T\to\infty}\frac{1}{T}\sum_{d=1}^{T}m_d$.
		Then
		\[
		\limsup_{T\to\infty}\frac{1}{T}\sum_{d=1}^{T}s_d
		\;\le\; \bar{m}\cdot\tau_h
		\;=\; \bar{m}\cdot
		\left\lfloor\frac{\Delta_{\mathrm{resp}}}{\underline{\delta}_{\mathrm{rtt}}}\right\rfloor.
		\]
	\end{theorem}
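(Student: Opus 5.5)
The plan is to reduce the statement to the per-window inequality of Lemma~\ref{lem:persistence} and then pass to Cesàro averages and a limit superior. Lemma~\ref{lem:persistence} gives, for every window $d$, the pointwise bound $s_d \le m_d\tau_h$. It holds under Assumption~\ref{asm:throughput} with the same constant $\tau_h = \lfloor\Delta_{\mathrm{resp}}/\underline{\delta}_{\mathrm{rtt}}\rfloor$ in every window, since $\tau_h$ depends only on deployment parameters and not on $d$, $s_d$ or $m_d$. No new structural argument should be needed: (H1)--(H3) already enter through Lemma~\ref{lem:persistence}, which rules out cross-window carry-over, so each window's bound is self-contained.

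The key steps, in order, are as follows.

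\emph{Step 1.} Sum the pointwise inequality over $d=1,\ldots,T$ to get
\[
\sum_{d=1}^{T}s_d \le \tau_h\sum_{d=1}^{T}m_d .
\]
This is legitimate because all quantities are nonnegative and the inequality holds termwise.

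\emph{Step 2.} Divide by $T>0$ to obtain
\[
\frac{1}{T}\sum_{d=1}^{T}s_d \le \tau_h\cdot\frac{1}{T}\sum_{d=1}^{T}m_d
\]
for every finite $T$.

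\emph{Step 3.} Take $\limsup_{T\to\infty}$ of both sides. Two facts justify this. First, if $a_T\le b_T$ for all $T$, then $\limsup a_T \le \limsup b_T$. Second, since $\tau_h\ge 0$ is a constant, $\limsup(\tau_h b_T) = \tau_h\limsup b_T$. Together these give exactly
\[
\limsup_{T\to\infty}\frac{1}{T}\sum_{d=1}^{T}s_d \le \tau_h\bar m ,
\]
and substituting the formula for $\tau_h$ yields the displayed form in the statement.

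The only step needing care is Step 3's handling of edge cases. The first is $\bar m=+\infty$: then the bound is trivially true, under the convention $\tau_h\cdot\infty=\infty$ for $\tau_h\ge1$. If $\tau_h=0$, every $s_d=0$ by Lemma~\ref{lem:persistence}, so both sides are $0$ when $\bar m$ is finite; I would state the convention $0\cdot\infty=0$ in that case. The second is an adaptive adversary choosing $m_d$ as a function of history. This is harmless because Lemma~\ref{lem:persistence} holds for each realized window regardless of how $m_d$ was chosen. Hence the argument applies pathwise to every execution, with no distributional assumption on the sequence $(m_d)$.

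I do not expect a genuine obstacle beyond stating these conventions. The substantive content, that per-window capacity does not accumulate across windows, is already supplied by Lemma~\ref{lem:persistence} via (H2)--(H3).
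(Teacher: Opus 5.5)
Your proposal is correct and matches the paper's proof: sum the per-window bound $s_d \le m_d\tau_h$ from Lemma~\ref{lem:persistence} over $T$ windows, divide by $T$, and take $\limsup$. Your handling of the edge cases ($\bar m=\infty$, $\tau_h=0$, and adaptively chosen $m_d$) is a sound addition that the paper leaves implicit.
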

	
	\begin{proof}
		Sum $s_d \le m_d\tau_h$ over $T$ windows, divide by $T$, take
		$\limsup$. The bound is explicit in $\Delta_{\mathrm{resp}}$ and
		$\underline{\delta}_{\mathrm{rtt}}$ via Assumption~\ref{asm:throughput}.
	\end{proof}
	
	\noindent\textbf{Burst attacks.}
	Concentrating effort in short bursts does not help: the total
	identity-window budget $B_T = \sum_{d=1}^T s_d$ satisfies
	$B_T \le \tau_h\sum_{d=1}^T m_d$, so burst attacks incur linear
	cost in total channel-windows regardless of temporal concentration.

	\section{Challenge Families}
	\label{sec:challenges}
	
	\subsection{Abstract Admissibility Requirements}
	
	A challenge family $\mathcal{C}$ is \emph{admissible} for \BPC\ if
	every instance satisfies~(H1)--(H4). Concretely:
	\begin{itemize}
		\item instances are unpredictable before issuance~(H2), which
		requires both the binding tag \emph{and} the task payload
		$\phi_{v,d,j}$ to be generated from the shared window-fresh
		$\mathrm{seed}_{v,d,j}$ of Equation~\ref{eq:seed-gen}, not from an
		independent or pre-generated source;
		\item correct responses are deterministically and
		\emph{publicly} verifiable from $(\phi_{v,d,j}, r_{v,d,j})$ alone,
		with no verifier-only secret required by $\mathsf{CheckTask}$;
		\item responses are cryptographically bound to $(v,d,j)$~(H1)
		via the construction of Section~\ref{sec:construction};
		\item the interaction model imposes an irreducible per-response
		latency that limits channel throughput to $\tau_h$ per window~(H4).
	\end{itemize}
	The construction of Section~\ref{sec:construction} satisfies~(H1)--(H3)
	for any task payload; admissibility under~(H4) depends on the interaction
	latency of the specific task family.
	
	\subsection{Admissible Families}
	
	\noindent\textbf{Perceptual visual matching.}
	A distorted or noise-corrupted image is shown with candidate matches;
	the participant selects the correct correspondence within
	$\Delta_{\mathrm{resp}}$. Admissibility under~(H4) does not depend on
	accuracy differences between humans and automated solvers; both can
	solve these tasks. What makes the family admissible is that current
	VLMs incur multi-second inference latencies on multi-image inputs,
	enforcing a throughput ceiling per channel independently of correctness.
	Instances are generated by random visual transformations; verification
	is deterministic.
	
	\noindent\textbf{Interactive reasoning tasks.}
	Short dependent reasoning sequences require maintaining intermediate
	state and submitting a result before a visible countdown expires.
	Multi-step inference latency limits automated throughput. Correctness
	is verified deterministically from dynamically generated parameters.
	
	\noindent\textbf{Biometric-light response tasks.}
	A fresh, unpredictable prompt requires a synchronized real-time signal
	(spoken phrase, motion gesture). Real-time synthesis of such signals
	under strict deadlines incurs additional latency overhead that
	contributes to throughput bounding under~(H4); the admissibility
	argument does not depend on synthesis being impossible, only on the
	channel latency floor it imposes. Verification uses transient signal
	features without long-term biometric enrollment.
	
	\noindent\textbf{Attention-based interaction tasks.}
	A dynamic scene requires continuous tracking or selection over several
	seconds, occupying the full $\Delta_{\mathrm{resp}}$ interval. This
	structurally limits any channel to one response per window.
	
	\subsection{Deployment Example: Relaxed Window Sizes}
	\label{sec:deployment-example}
	
	The formal guarantees of~\BPC\ hold across the full parameter space of
	$\Delta_{\mathrm{resp}}$ and $\tau_h$, from sub-second strict deadlines
	to relaxed daily interaction windows. A concrete example illustrates
	the range.
	
	\noindent\textbf{Daily participation with device-bound login.}
	Consider a system in which each identity must perform one login-style
	interaction per day: $\Delta_h = 24\,\text{h}$, $\Delta_{\mathrm{resp}}
	= 24\,\text{h}$, and $\tau_h = 1$. A single human has 24 hours to
	respond---there is no cognitive pressure. However, if the login is
	\emph{device-bound} (bound to a specific smartphone, secure element, or
	attested browser session), then sustaining $s$ identities requires $s$
	independent attested devices or sessions. An adversary cannot amortize
	one device across $s$ identities within the same window: each identity
	requires its own channel-window.
	
	Under this configuration, Theorem~\ref{thm:joint-linear} gives
	$C(s,T) = \Omega(sT)$ channel-windows even with $\tau_h = 1$. Sybil
	resistance does not depend on the challenge being cognitively demanding
	or time-pressured; it depends entirely on channel non-transferability
	(H1) and window-locality~(H4) enforced by the device binding.
	
	This example also clarifies the design space: tighter
	$\Delta_{\mathrm{resp}}$ values (seconds to minutes) provide stronger
	near-real-time resistance but impose higher usability cost. Relaxed
	windows (hours to daily) reduce user burden while preserving the
	linear cost structure, provided channel binding is enforced
	engineering-side. The appropriate point in this tradeoff is a
	deployment-level decision orthogonal to the formal guarantees.
	
	\subsection{Rotation and Composability}
	
	\BPC\ does not depend on any single family. Systems may rotate across
	admissible families between windows or combine them within a window.
	Since security depends only on~(H1)--(H4), rotation preserves all
	guarantees of Theorems~\ref{thm:joint-linear}
	and~\ref{thm:steady-state}---this is essential for long-lived
	deployments where the automation landscape evolves.

	\section{Empirical Evaluation}
	\label{sec:eval}
	
	\noindent\textbf{Evaluation scope.}
	The purpose of this evaluation is not to demonstrate that automated
	systems fail to solve \BPC\ challenges. Rather, the goal is to
	characterize the throughput properties of representative participation
	channels under deployment-realistic deadlines and to measure the
	resulting values of $\tau_h$. High per-challenge accuracy and bounded
	throughput are not contradictory properties; \BPC's formal security
	requires only the latter. The evaluation is designed accordingly: we
	treat high solver accuracy as an expected outcome and focus on whether
	accuracy translates into unbounded servicing capacity within a window.
	
	We empirically evaluate Property~(H4) along two complementary axes.
	First, we measure the throughput of three state-of-the-art automated
	solvers against a deployed instantiation of the perceptual and
	reasoning challenge families, isolating $\delta_{\mathrm{rtt}}$ and the
	resulting $\tau_h$ under realistic inference conditions. Second, we
	contextualize these measurements against established findings on human
	response time and accuracy on structurally analogous perceptual and
	reasoning tasks, drawn from the human-performance literature. This
	design directly targets the empirical content of
	Assumption~\ref{asm:throughput}: that $\tau_h$ is small and bounded for
	every participation channel, automated or human, under a deployment-realistic
	$\Delta_{\mathrm{resp}}$.
	
	\subsection{Methodology}
	\label{sec:eval-methodology}
	
	\noindent\textbf{Challenge implementation.}
	We implemented the perceptual and reasoning challenge families
	(Section~\ref{sec:challenges}) as a deployed web application,
	with challenge generation and verification logic identical to the
	formal specification of Section~\ref{sec:construction}.\footnote{The
		deployed web application, source code, and challenge generators are
		available in an anonymized repository for review and will be
		publicly released, with a live deployment link, upon
		publication.} Each
	candidate images---one correctly corresponding to the original under
	random rotation, additive Gaussian noise, and color jitter, three
	drawn from distinct distractor base images under independent
	transformations---requiring the solver to identify the matching index
	within $\Delta_{\mathrm{resp}} = 8\,\mathrm{s}$. Each reasoning instance
	presents a four-element numeric sequence (arithmetic, geometric,
	Fibonacci-like, or perfect-square) and requires the next element within
	$\Delta_{\mathrm{resp}} = 12\,\mathrm{s}$. Using synthetic geometric
	shapes for the perceptual family avoids copyright and memorization
	concerns associated with natural images while preserving the
	noise-robust matching structure relevant to~(H4).
	
	\noindent\textbf{Automated solvers.}
	We evaluate three vision-language models representative of current
	commercial inference APIs: GPT-4o~\cite{openai2023gpt4}, Gemini~2.5~Flash~\cite{geminiteam2023},
	and Claude~Sonnet~4.5~\cite{anthropic2024claude}. For each (solver,
	family) pair we generate 100 independent trials (600 trials total),
	drawing fresh challenge instances via the same generator used to serve
	human participants. Critically, each solver receives only the
	\emph{public} challenge fields---images and sequence values---and never
	the private verification target; this mirrors the information
	available to a real adversary attempting to defeat~\BPC\ and rules out
	trivial leakage of the correct answer through the prompt construction.
	
	For perceptual trials, the original and four option images are
	transmitted as base64-encoded PNG content blocks via each provider's
	standard multi-image chat completion interface, accompanied by a
	fixed instruction requesting a single-digit index. For reasoning
	trials, the four-element sequence is transmitted as plain text with an
	instruction requesting the next integer. We measure
	\emph{wall-clock latency} $\delta_{\mathrm{rtt}}$ as the interval
	between issuing the API request and receiving a parsed response,
	enforced via a hard per-trial timeout equal to $\Delta_{\mathrm{resp}}$.
	A trial is \textsf{passed} iff the response is both correct and
	received within $\Delta_{\mathrm{resp}}$; we separately record
	\textsf{latency\_fail} (correct response arriving after the deadline)
	and \textsf{correctness\_fail} (an incorrect response, regardless of
	timing), consistent with the failure-mode separation introduced in
	Section~\ref{sec:spec}. Requests are issued sequentially per solver
	with inter-trial delays respecting each provider's published rate
	limits; full results, including per-trial latency, correctness, and
	raw responses, are released alongside this paper.
	
	\noindent\textbf{Mean latency versus the Assumption~\ref{asm:throughput}
		floor.} Table~\ref{tab:evaluation-results} and
	Figure~\ref{fig:empirical} report $\tau_h =
	\lfloor\Delta_{\mathrm{resp}} / \bar{\delta}_{\mathrm{rtt}}\rfloor$
	using each solver's \emph{mean} observed latency $\bar{\delta}_{\mathrm{rtt}}$.
	This is a descriptive statistic about the tested solvers under our
	trial conditions, not an instantiation of the formal
	$\underline{\delta}_{\mathrm{rtt}}$ of Assumption~\ref{asm:throughput}:
	an adversary operating preferentially at the low-latency
	tail---for example by issuing multiple concurrent requests and
	taking the fastest---routinely beats the mean, so a $\tau_h$ computed
	from the mean is \emph{not} a valid upper bound on adversarial
	throughput and should not be read as one. We note that exploiting
	the tail this way still requires provisioning additional concurrent
	connections, each of which counts as a separate channel-window under
	Assumption~\ref{asm:throughput}, so it does not escape the $\Omega(sT)$
	scaling---but it does mean that a deployment's operative
	$\underline{\delta}_{\mathrm{rtt}}$ must be set conservatively, at or
	below the fastest round-trip time the threat model plausibly admits,
	not at the sample mean. We release per-trial latency distributions
	alongside this paper so that deployments can calibrate
	$\underline{\delta}_{\mathrm{rtt}}$ from a low percentile of observed
	latency (and size $\Delta_{\mathrm{resp}}$ accordingly) for their
	specific threat model, rather than from the mean values reported here
	for cross-solver comparison.
	
	\noindent\textbf{Human performance (literature-calibrated).}
	A full-scale controlled human study with statistically powered
	recruitment is left as immediate future work
	(Section~\ref{sec:limitations}). In place of a de novo study, we
	contextualize the automated-solver measurements against established
	findings on human performance on structurally comparable tasks under
	time pressure. For perceptual matching under noise and geometric
	distortion, human recognition accuracy in the 85--95\% range with
	response times of 3--7 seconds is consistently reported across
	controlled psychophysics studies of human visual generalization under
	noise and shape distortion~\cite{geirhos2018generalisation} and of
	human performance on perceptually structured CAPTCHA-style matching
	tasks~\cite{sivakorn2016iam}. For short numeric and symbolic
	sequence-completion tasks under time pressure, human accuracy in the
	80--95\% range with response times of 2--6 seconds is reported in the
	interactive human--AI reasoning literature~\cite{schutt2023humanvsai}.
	We use these published ranges, rather than a simulated distribution
	fit to arbitrary parameters, to anchor the comparison in
	Section~\ref{sec:eval-results}; we treat this as a conservative
	contextualization rather than a substitute for first-party human-subject
	data, and we flag this explicitly as a limitation
	(Section~\ref{sec:limitations}).
	
	\subsection{Results}
	\label{sec:eval-results}
	
	Figure~\ref{fig:empirical} and Table~\ref{tab:evaluation-results}
	summarize the results across all six (solver, family) pairs.
	
	\begin{figure*}[t]
		\centering
		\includegraphics[width=\textwidth]{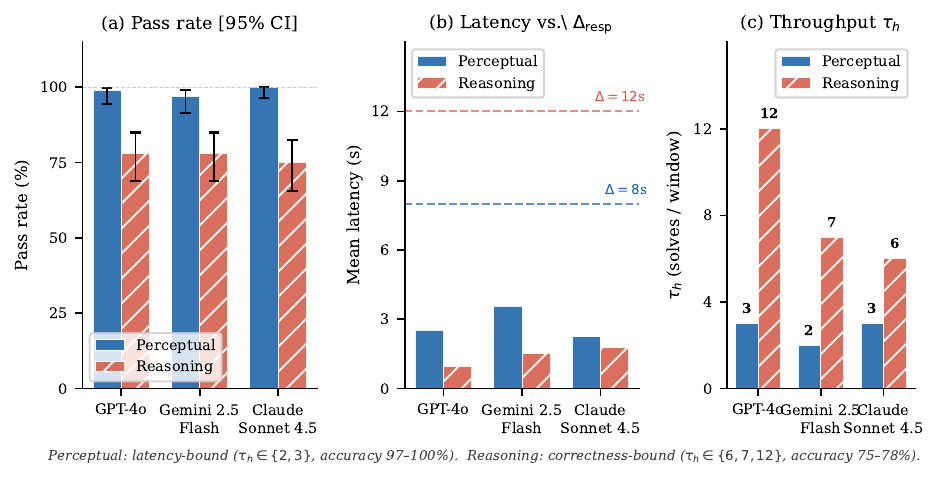}
		\caption{Empirical throughput evaluation across three frontier VLMs
			(100 trials per solver--family pair; $N=600$ total).
			\textbf{(a)}~Pass rates with 95\% Wilson CI confirm high accuracy
			on perceptual tasks (97--100\%) and moderate accuracy on reasoning
			(75--78\%).
			\textbf{(b)}~Mean latency against $\Delta_{\mathrm{resp}}$ (dashed)
			shows that perceptual inference (2.25--3.57\,s) consumes a large
			fraction of the 8\,s deadline; reasoning responses (0.95--1.79\,s)
			are faster but below the 12\,s deadline.
			\textbf{(c)}~Resulting mean-latency-based $\tau_h =
			\lfloor\Delta_{\mathrm{resp}} /
			\bar{\delta}_{\mathrm{rtt}}\rfloor$ values illustrate that
			observed throughput is small and bounded for these solvers:
			$\tau_h\in\{2,3\}$ (latency-bound) for perceptual,
			$\tau_h\in\{6,7,12\}$ (correctness-bound) for reasoning
			(illustrative, not a conservative bound; see
			Section~\ref{sec:eval-results}). Across all three independently
			developed models the bottleneck type is \emph{consistent}, ruling
			out provider-specific artifacts.}
		\label{fig:empirical}
	\end{figure*}
	
	\begin{table*}[t]
		\centering
		\caption{Automated solver performance under strict
			$\Delta_{\mathrm{resp}}$ enforcement (100 trials per
			solver--family pair; 95\% Wilson score confidence intervals
			on success rate). $\tau_h$ uses each solver's measured mean
			latency and is illustrative, not a conservative instantiation
			of Assumption~\ref{asm:throughput} (Section~\ref{sec:eval-results}).
			Published human-performance ranges for structurally comparable
			tasks are shown for
			context~\cite{geirhos2018generalisation,sivakorn2016iam,schutt2023humanvsai}.}
		\label{tab:evaluation-results}
		\resizebox{\linewidth}{!}{%
			\begin{tabular}{llccccc}
				\toprule
				Family & Solver
				& Success (\%) [95\% CI]
				& Mean Latency (s)
				& Latency Fail (\%)
				& Correctness Fail (\%)
				& $\tau_h$ \\
				\midrule
				\multirow{4}{*}{Perceptual ($\Delta_{\mathrm{resp}}=8$s)}
				& GPT-4o            & 99.0~[94.6, 99.8]  & 2.53 & 1.0 & 0.0 & 3 \\
				& Gemini 2.5 Flash  & 97.0~[91.5, 99.0]  & 3.57 & 0.0 & 3.0 & 2 \\
				& Claude Sonnet 4.5 & 100.0~[96.3, 100.0] & 2.25 & 0.0 & 0.0 & 3 \\
				& \emph{Human (literature)} & \emph{85--95} & \emph{3--7} & \emph{---} & \emph{---} & \emph{1--2} \\
				\midrule
				\multirow{4}{*}{Reasoning ($\Delta_{\mathrm{resp}}=12$s)}
				& GPT-4o            & 78.0~[68.9, 85.0] & 0.95 & 0.0 & 22.0 & 12 \\
				& Gemini 2.5 Flash  & 78.0~[68.9, 85.0] & 1.50 & 0.0 & 22.0 & 7  \\
				& Claude Sonnet 4.5 & 75.0~[65.7, 82.5] & 1.79 & 0.0 & 25.0 & 6  \\
				& \emph{Human (literature)} & \emph{80--95} & \emph{2--6} & \emph{---} & \emph{---} & \emph{2--6} \\
				\bottomrule
			\end{tabular}%
		}
	\end{table*}
	
	\noindent\textbf{The bottleneck is family-dependent, not solver-dependent.}
	The dominant failure mode shifts systematically with challenge family
	and is highly consistent across all three independently developed
	models---a pattern that would not be expected if the result reflected
	an idiosyncrasy of a single provider's infrastructure. On the
	perceptual family, all three solvers achieve high correctness
	(97--100\%) but incur mean latencies of 2.25--3.57\,s; failures here
	are concentrated in \textsf{latency\_fail} and the residual
	\textsf{correctness\_fail} (0--3\%), yielding
	$\tau_h \in \{2, 3\}$ per window. On the reasoning family, all three
	solvers respond an order of magnitude faster (0.95--1.79\,s) but with
	substantially lower correctness (75--78\%), so that failures are
	concentrated almost entirely in \textsf{correctness\_fail}
	(22--25\%) rather than \textsf{latency\_fail} ($\approx 0\%$), yielding
	a larger $\tau_h \in \{6,7,12\}$. Averaged across the three solvers,
	$\bar{\tau}_h^{\mathrm{perceptual}} \approx 2.7$ and
	$\bar{\tau}_h^{\mathrm{reasoning}} \approx 8.3$.
	
	These findings are entirely consistent with the design goals of \BPC.
	The primitive does not require automated solvers to fail; it requires
	only that no participation channel achieve unbounded servicing capacity
	within a window. High solvability and bounded throughput are not
	contradictory: a solver that answers every perceptual challenge
	correctly but incurs 2--4\,s per call is still throughput-bounded,
	and a solver that responds in under 2\,s but achieves only 75--78\%
	correctness is bounded by its effective pass rate rather than its
	speed. In both cases the adversarial cost scales linearly with $s$
	and $T$, consistent with Theorem~\ref{thm:joint-linear}.
	
	This asymmetry is informative rather than incidental.
	Property~(H4) does not require that every channel fail outright; it
	requires only that $\tau_h$ remain small and bounded relative to the
	number of identities an adversary wishes to sustain. The perceptual
	family demonstrates a \emph{latency}-bound regime: solvers are
	accurate but the irreducible round-trip cost of multi-image inference
	caps throughput at 2--3 solves per window regardless of provider.
	The reasoning family demonstrates a \emph{correctness}-bound regime:
	solvers respond quickly, but roughly one in four responses is
	incorrect, so that sustaining a high \emph{rate} of valid identity
	updates still requires either accepting a high rejection rate or
	expending additional channel-windows to compensate---both of which
	preserve the linear cost structure of Theorem~\ref{thm:linear-cost}
	rather than escaping it. We discuss the adversarial implications of
	this distinction, including channel parallelization across many
	independent solver instances, in Section~\ref{sec:security}.
	
	\noindent\textbf{High $\tau_h$ and the linearity argument.}
	The larger $\tau_h$ values on reasoning tasks---$\tau_h \in \{6,7,12\}$---do
	not undermine the $\Omega(sT)$ cost structure. Sustaining $s = 1{,}000$
	reasoning identities under the worst observed case ($\tau_h = 12$) still
	requires $\lceil 1{,}000/12 \rceil = 84$ concurrently operating, independent
	solver instances per window---each issuing its own API request, consuming
	its own inference quota, and incurring its own round-trip budget.
	The key invariant is \emph{linearity} in $s$: cost scales as $\Omega(s)$
	per window regardless of whether $\tau_h = 2$ or $\tau_h = 12$. The
	constant factor determines the cost-per-identity, not whether cost is
	linear. For deployments requiring tighter per-channel bounds, the
	perceptual family ($\tau_h \in \{2,3\}$) provides a stronger guarantee
	per sustained identity; the reasoning family may be preferred when
	minimizing per-interaction user friction is the design priority.
	
	\noindent\textbf{Percentile-calibrated throughput and effective
		throughput.} The mean-latency $\tau_h$ values above are, as noted
	in Section~\ref{sec:eval-methodology}, an illustrative reference
	point rather than a conservative instantiation of
	Assumption~\ref{asm:throughput}: an adversary racing multiple
	concurrent requests and keeping the fastest routinely beats the mean.
	Table~\ref{tab:percentile} recomputes $\tau_h$ from each solver's
	10th-percentile latency ($p_{10}$)---a conservative stand-in for
	$\underline{\delta}_{\mathrm{rtt}}$ under such tail-racing---and
	reports the resulting \emph{effective throughput}
	\[
	\tau_h^{\mathrm{eff}} \;=\; p_{\mathrm{success}} \cdot
	\left\lfloor \frac{\Delta_{\mathrm{resp}}}{p_{10}} \right\rfloor,
	\]
	which discounts the $p_{10}$-based channel capacity by the family's
	measured success rate $p_{\mathrm{success}}$, accounting for the fact
	that not every fast response is also a correct one---relevant chiefly
	for the correctness-bound reasoning family. The $p_{10}$ calibration
	raises every reasoning $\tau_h$ relative to the mean-based figures in
	Table~\ref{tab:evaluation-results} (most sharply for GPT-4o, from
	$12$ to $22$); $\tau_h^{\mathrm{eff}}$ then discounts each by
	$p_{\mathrm{success}}$, pulling the reasoning family back down toward,
	and in one case (GPT-4o) above, its mean-based value, while leaving
	the perceptual family essentially unchanged (its near-100\% success
	rate makes the correctness discount negligible). The qualitative
	conclusion is unchanged---$\tau_h$ remains small and bounded under
	either calibration---but a deployment sizing $\Delta_{\mathrm{resp}}$
	against a tail-racing adversary should use $\tau_h^{\mathrm{eff}}$,
	not the mean-based figure, as its working estimate. Full per-trial
	latencies are released alongside this paper so that deployments can
	recompute both quantities at any percentile appropriate to their
	threat model.
	
	\begin{table}[t]
		\centering
		\caption{Percentile-calibrated throughput. $\tau_h^{p_{10}} =
			\lfloor \Delta_{\mathrm{resp}}/p_{10}\rfloor$ uses each solver's
			10th-percentile latency in place of the mean;
			$\tau_h^{\mathrm{eff}} = p_{\mathrm{success}}\cdot
			\tau_h^{p_{10}}$ additionally discounts by success rate. Computed
			from the same 600-trial dataset as
			Table~\ref{tab:evaluation-results}.}
		\label{tab:percentile}
		\resizebox{\linewidth}{!}{%
			\begin{tabular}{llcccc}
				\toprule
				Family & Solver & $p_{10}$ (s) & $\tau_h^{\mathrm{mean}}$ & $\tau_h^{p_{10}}$ & $\tau_h^{\mathrm{eff}}$ \\
				\midrule
				\multirow{3}{*}{Perceptual}
				& GPT-4o            & 1.92 & 3 & 4 & 3.96 \\
				& Gemini 2.5 Flash  & 2.71 & 2 & 2 & 1.94 \\
				& Claude Sonnet 4.5 & 1.87 & 3 & 4 & 4.00 \\
				\midrule
				\multirow{3}{*}{Reasoning}
				& GPT-4o            & 0.54 & 12 & 22 & 17.16 \\
				& Gemini 2.5 Flash  & 1.02 & 7  & 11 & 8.58  \\
				& Claude Sonnet 4.5 & 1.51 & 6  & 7  & 5.25  \\
				\bottomrule
			\end{tabular}%
		}
	\end{table}
	
	\noindent\textbf{Comparison with published human performance.}
	The literature-calibrated human ranges in
	Table~\ref{tab:evaluation-results} are not directly commensurable
	with our automated measurements as a controlled paired comparison---
	the underlying tasks differ in surface presentation even where
	structurally analogous---and we do not claim statistical
	significance for any human-versus-automated gap. We report them only
	to situate the automated $\tau_h$ values within a plausible deployment
	range: published human response times on comparable perceptual and
	reasoning tasks (3--7\,s and 2--6\,s respectively) are of the same
	order of magnitude as the automated solver latencies we measure,
	supporting the modeling choice in Assumption~\ref{asm:throughput}
	that $\tau_h$ is small (single-digit) for \emph{both} channel types
	under deployment-realistic $\Delta_{\mathrm{resp}}$ values, rather than
	near-zero for automated channels and large for human channels
	(Section~\ref{sec:spec}).

	\section{Security Analysis}
	\label{sec:security}
	
	We analyze \BPC\ against the principal adversarial strategies. All
	arguments are grounded in~(H1)--(H4) and the construction of
	Section~\ref{sec:construction}.
	
	\noindent\textbf{Automated solving.}
	An adversary deploying AI solvers to fill participation channels does
	not circumvent the linear cost bound. By~(H4), each automated channel
	is throughput-bounded at $\tau_h$ responses per window under
	$\Delta_{\mathrm{resp}}$. Scaling to $s$ identities requires $\Omega(s)$
	independent channels. Automation shifts the cost from human labor to
	API credits or hardware, but preserves linear scaling in $s$.
	
	The empirical measurements of Section~\ref{sec:eval-results} make
	this concrete. A single GPT-4o instance sustains
	$\tau_h \approx 3$ perceptual identities or $\tau_h \approx 12$
	reasoning identities per window (Table~\ref{tab:evaluation-results}).
	An adversary aiming to sustain $s = 1000$ perceptual identities
	therefore requires $m \geq \lceil 1000/3 \rceil = 334$ \emph{independent}
	solver instances---each issuing its own API request, incurring its own
	latency, and paying its own marginal inference cost---running
	concurrently within the same window. This is precisely the channel
	scaling permitted by the adversary model of
	Section~\ref{sec:model}: \BPC\ does not claim that an adversary cannot
	acquire 334 channels, only that doing so costs $\Omega(s)$ in
	provisioned, concurrently operating channels---in contrast to a
	reusable resource such as hashpower or stake, which may or may not
	be subdivided across $s$ identities at lower marginal cost depending
	on the economics of sustaining it~\cite{maleki2026geometry}. A single
	inference endpoint cannot itself serve 1000 identities within one
	window: each concurrent request still incurs the same
	$\delta_{\mathrm{rtt}}$, so throughput scales with the number of
	provisioned endpoints, not with the capability of any one model.
	Unlike a reusable resource, \BPC\ requires fresh participation
	within each bounded response window; this is the structural
	distinction between \emph{resource}-parallelism, whose cost regime
	depends on a separate economic assumption, and
	\emph{channel}-parallelism, which~\BPC\ explicitly permits but prices
	linearly in $s$ by construction of the channel model established
	above (Section~\ref{sec:cost}).
	
	\BPC\ should therefore not be interpreted as an anti-AI mechanism.
	Its objective is to continuously price independently sustained
	participation, not to defeat any particular solver technology.
	Automation changes the implementation of participation channels and
	their operational costs---from human labor to API credits and
	infrastructure---but does not invalidate the linear identity-time cost
	theorem. The theorem holds equally whether channels are operated by
	humans, AI systems, or any combination thereof.
	
	\noindent\textbf{Economic cost of channel provisioning.}
	The linear scaling imposes concrete operational constraints beyond
	per-call inference cost alone. Sustaining $s = 1{,}000$ perceptual
	identities per window requires $\lceil 1{,}000/3 \rceil = 334$
	simultaneously active, independent solver instances---each maintaining
	its own session token, its own API quota, and its own connection to the
	challenge server. This provisioning cost recurs across every window:
	over $T$ windows the adversary must provision $334 \cdot T$
	instance-windows in total. Under Assumption~\ref{asm:opcost}, this
	structural requirement becomes an economic one:
	$C_{\mathrm{econ}}(s,T) \geq c_{\mathrm{op}} \cdot C(s,T) = \Omega(sT)$
	(Corollary~\ref{cor:concrete}). Crucially, $c_{\mathrm{op}}$ is not
	amortizable \emph{beyond $\tau_h$}: a single API key or worker cannot
	have one response credited to multiple identities~(H1), though it
	may serve multiple identities sequentially within a window, up to
	the channel-throughput bound $\tau_h$---which is exactly why
	sustaining $s=1{,}000$ identities above requires
	$\lceil 1{,}000/\tau_h\rceil = 334$ instances rather than $1{,}000$,
	not zero amortization within a window.
	Whether $c_{\mathrm{op}}$ itself can be driven down \emph{across}
	identities by bulk API pricing or labor-market economies of scale as
	$s$ grows is the separate empirical question Assumption~\ref{asm:opcost}
	makes explicit (Section~\ref{sec:limitations}). Deployments may
	calibrate $c_{\mathrm{op}}$ and $\tau_h$ jointly---through challenge
	selection, $\Delta_{\mathrm{resp}}$ tuning, and per-challenge
	computational cost---to ensure that large-scale Sybil maintenance is
	economically prohibitive for the target threat model.
	
	\noindent\textbf{Human outsourcing and labor markets.}
	An adversary may hire human workers or use CAPTCHA-solving
	services~\cite{motoyama2010recaptchas} to solve challenges.
	This attack is captured in the adversary model: the adversary operates
	$m$ human channels. Sustaining $s$ identities requires $m = \Omega(s)$
	workers (Theorem~\ref{thm:linear-cost}). Outsourcing converts the cost
	from direct labor to wages, but the linear scaling is preserved.
	
	\noindent\textbf{Relay and real-time forwarding attacks.}
	Relaying challenges to external solvers adds network round-trip time,
	which tightens rather than relaxes~(H4). By~(H1) and~(H2)--(H3),
	relayed responses are identity-bound and non-reusable; relaying does
	not reduce the total independent channel-windows required.
	
	\noindent\textbf{Replay and solution reuse attacks.}
	By~(H1), response $\rho_{v,d,j}$ is bound to $(v,d,j)$ via the
	signature $\sigma_v$ and the binding tag. Submitting the same response
	for $(v',d',j') \neq (v,d,j)$ fails the binding-tag check in
	Algorithm~\ref{alg:verify} (lines~3--5). By~(H2), the nonce $\eta_d$
	ensures responses prepared before window $d$ are invalid.
	
	\noindent\textbf{Precomputation and batch-solving.}
	By~(H2), $\chi_{v,d,j}$ cannot be computed before $\eta_d$ is revealed.
	Batch-solving within a window is bounded by~(H4): processing $k > \tau_h$
	challenges requires more than $\tau_h$ round-trip latency budgets,
	which is infeasible under $\Delta_{\mathrm{resp}}$.
	
	\noindent\textbf{Synthetic media and deepfake attacks.}
	For biometric-light families, an adversary may attempt real-time synthesis
	of a biometric response. Mitigations include: (i)~challenge freshness~(H2)
	prevents off-line synthesis; (ii)~liveness detection flags synthesized
	signals; (iii)~challenge rotation to families resistant to synthesis
	(e.g., attention-based interaction) eliminates reliance on vulnerable
	modalities.
	
	\noindent\textbf{Side-channel and man-in-the-browser attacks.}
	A malicious browser extension may observe and forward challenges to an
	external solver. This constitutes a relay attack and is bounded as above.
	Deployment-level countermeasures---content-security-policy headers,
	subresource integrity, and challenge obfuscation---reduce exfiltration
	feasibility but are orthogonal to the core security model.
	
	\noindent\textbf{Privacy and data minimization.}
	Verification requires only $\rho = (\sigma_v, r_{v,d,j})$ and
	$\chi_{v,d,j}$; no raw biometric data or behavioral record is stored.
	$\mathsf{pk}_v$ may be pseudonymous. Privacy-preserving challenge
	delivery ensuring unlinkability across windows while preserving identity
	binding is an important direction for future work.

	\section{Limitations}
	\label{sec:limitations}
	
	\noindent\textbf{Scope of the cryptographic guarantees.}
	The two-layer split of Section~\ref{sec:model} has two concrete
	consequences for what can weaken \BPC's guarantees. First,
	Property~(H4)---formalized as Assumption~\ref{asm:throughput}---is
	\emph{not} a cryptographic guarantee like~(H1)--(H3); it is a
	structural property of the channel parameterized by
	$\tau_h = \lfloor\Delta_{\mathrm{resp}}/\underline{\delta}_{\mathrm{rtt}}\rfloor$,
	and all cost results are explicitly conditional on it. Its
	degradation is quantified, not catastrophic: Corollary~\ref{cor:concrete}
	shows a tenfold increase in $\tau_h$ weakens the channel-window
	lower bound by a factor of ten without invalidating the linear
	scaling in $s$ and $T$. Second, because~(H3) and all cost results
	hold unconditionally given Assumption~\ref{asm:throughput}---independently
	of computational power---advances in raw compute alone do not
	weaken the linear-cost guarantee; only reductions in
	$\underline{\delta}_{\mathrm{rtt}}$ itself (faster hardware,
	lower-latency inference, co-located solvers) affect the bound, and
	these are addressed by adjusting $\Delta_{\mathrm{resp}}$
	accordingly.
	
	\noindent\textbf{Dependence on Assumption~\ref{asm:throughput}~(H4).}
	The linear cost guarantee rests on Assumption~\ref{asm:throughput},
	and can be weakened along two distinct routes. First, a faster
	channel: an adversary with ultra-low-latency inference infrastructure---co-located
	on-premise hardware, specialized accelerators, or a distilled model
	running at sub-100\,ms latency---could achieve
	$\delta_{\mathrm{rtt}} \ll \Delta_{\mathrm{resp}}$, undercutting
	whatever $\underline{\delta}_{\mathrm{rtt}}$ the deployment adopted
	and driving $\tau_h$ up. Second, more channels: adversaries may scale
	channel capacity through infrastructure---VM farms, cloud phone
	emulators, browser-bot fleets, distributed worker pools---rather than
	beating any single channel's latency. Neither route invalidates~(H4)
	or breaks the $\Omega(sT)$ scaling in $s$ and $T$; both convert the
	guarantee from an \emph{impossibility} into a \emph{cost multiplier}
	that a deployment must actively manage: a faster channel weakens
	the per-channel-window bound by a constant factor, and each
	additional farmed channel still requires its own provisioning cost,
	so total adversarial cost remains $\Omega(sT)$ either way.
	
	This places~(H4) in a different category from the cryptographic
	assumptions underlying~(H1)--(H2): DDH or EUF-CMA are treated as
	asymptotically stable, and their failure is catastrophic and
	binary. Assumption~\ref{asm:throughput} is instead a
	physically-grounded, empirically measured, and
	\emph{recalibratable} quantity, degrading gracefully rather than
	catastrophically---the same category of assumption underlying
	Proof-of-Work mining difficulty, which Bitcoin does not treat as
	disqualifying but \emph{retargets} periodically against measured
	hash rate~\cite{nakamoto2008bitcoin}, or an honest-majority
	assumption in Byzantine fault-tolerant protocols, which likewise
	carries a concrete, quantifiable cost to violate rather than an
	absolute guarantee. \BPC\ requires the analogous operational
	discipline: long-lived deployments must periodically reassess the
	prevailing $\underline{\delta}_{\mathrm{rtt}}$ floor for the deployed
	challenge family and size $\Delta_{\mathrm{resp}}$ and per-channel
	costs so that beating it---by speed or by scale---stays
	economically prohibitive for the target threat model. This is an
	operational responsibility, not a cryptographic guarantee; \BPC\
	bounds adversarial \emph{cost} rather than adversarial
	\emph{capability}, and disclosing that structure is a design choice
	rather than a deficiency.
	
	\noindent\textbf{Uniformity of per-channel operating cost.}
	Corollary~\ref{cor:concrete}'s economic bound rests on
	Assumption~\ref{asm:opcost} not being automatic: bulk-rate API
	pricing above a volume threshold, or labor-market economies of scale
	in sourcing human operators at large $s$, could in principle drive an
	adversary's effective $c_{\mathrm{op}}$ downward with scale. Should
	this occur, the structural bound $C(s,T) = \Omega(sT)$
	(Theorem~\ref{thm:joint-linear}) is unaffected, but the economic
	floor $C_{\mathrm{econ}}(s,T)$ could grow more slowly. Like
	Assumption~\ref{asm:throughput}, this is an empirical and economic
	question the paper does not resolve on a given deployment's behalf.
	
	\noindent\textbf{Scope of the empirical evaluation.}
	Our automated-solver measurements (Section~\ref{sec:eval}) are
	first-party and empirical: 100 trials per solver--family pair across
	three independently developed vision-language
	models~\cite{openai2023gpt4,geminiteam2023,anthropic2024claude}, with
	Wilson score confidence intervals reported in
	Table~\ref{tab:evaluation-results}. The human side of the comparison,
	by contrast, is literature-calibrated rather than a first-party
	controlled study: we anchor expected human performance to published
	ranges~\cite{geirhos2018generalisation,sivakorn2016iam,schutt2023humanvsai}
	rather than to a study conducted on our own challenge
	implementation with a controlled recruitment
	platform~\cite{prolific2014,peer2022data}. This is a deliberate
	scoping choice---the central claim of~(H4) is solver-agnostic
	(Proposition~\ref{prop:agnostic}) and does not require a paired
	human-versus-automated comparison to establish that $\tau_h$ is
	small and bounded for automated channels---but it does mean the
	human-performance figures in Table~\ref{tab:evaluation-results}
	should be read as contextual rather than as a controlled baseline.
	A first-party human-subject study on the deployed challenge
	implementation (N~$\geq$~100 via Prolific, pre-registered analysis,
	direct statistical comparison against the automated results reported
	here) is the immediate next step and is planned as near-term future
	work. A related open question is more theoretical than empirical:
	this evaluation measures human and automated solve rates
	\emph{post hoc} for a fixed challenge instantiation, but a
	deployment may instead want to \emph{design} challenge difficulty
	to hit a target human solve rate directly---a calibration problem
	we leave for future work (Section~\ref{sec:conclusion}). The
	automated evaluation also does not cover task-specific fine-tuned
	models, custom inference hardware, or organized human--AI hybrid
	pipelines that combine automated pre-filtering with human fallback;
	characterizing $\tau_h$ under such hybrid adversary strategies is
	left for future work.
	
	\noindent\textbf{Accessibility and inclusivity.}
	Challenge families requiring visual, auditory, or motor interaction may
	exclude participants with relevant impairments. Deployments must provide
	alternative modalities and accessibility-aware selection policies.
	
	\noindent\textbf{Parameter sensitivity.}
	$\Delta_{\mathrm{resp}}$ and $k(d)$ jointly determine the security-usability
	tradeoff. The formal guarantees hold for any $\Delta_{\mathrm{resp}} > 0$,
	but concrete parameterization requires empirical measurement of $\tau_h$
	and honest participant success rates in the target deployment environment.
	
	\noindent\textbf{Theoretical scope.}
	The cost results assume~(H1)--(H3) hold and Assumption~\ref{asm:throughput}
	(H4) is satisfied throughout execution. Partial failures of~(H4)---
	arising when the actual round-trip latency $\delta_{\mathrm{rtt}}$
	falls below the modeled floor $\underline{\delta}_{\mathrm{rtt}}$---
	yield intermediate scaling regimes in which $C(s,T) = \Omega(sT/\tau_h)$
	with larger $\tau_h$. The broader characterization of when resource
	structure translates into an economic cost floor---as opposed to a
	merely structural requirement---is developed
	in~\cite{maleki2026geometry}; the results here use only the
	structural component of that framework, instantiated through \BPC's
	bounded participation-channel model.

	\section{Conclusion}
	\label{sec:conclusion}
	
	We introduced the Bounded Participation Channel~(\BPC), a formal primitive
	for continuous, identity-bound, rate-limited participation verification,
	combining a cryptographic layer enforcing~(H1)--(H3) with a structural
	channel constraint enforcing~(H4). Under Properties~(H1)--(H4), \BPC\ enforces
	$C(s,T) = \Omega(sT)$: sustaining $s$ identities over $T$ windows
	requires $\Omega(sT)$ independent channel-windows, regardless of
	whether channels are operated by humans, automated agents, or any
	combination thereof. No assumption about human cognitive superiority
	is required. Rather than accumulating participation into a score or
	reputation, \BPC\ exports a sequence of independently verifiable,
	time-indexed participation proofs $\pi(v,d)$; higher-level systems
	remain free to interpret this sequence according to their own
	policies.
	
	We presented a hash-based construction satisfying~(H1)--(H3) under
	standard cryptographic assumptions, with Algorithm~\ref{alg:verify}
	as the formal verification procedure. We identified four admissible
	challenge families, described browser-based instantiations, and
	conducted an empirical evaluation across three independently
	developed VLMs (GPT-4o, Gemini~2.5~Flash, Claude~Sonnet~4.5)
	confirming that all three impose a natural throughput ceiling
	under strict response deadlines---bound by latency on perceptual
	tasks and by correctness on reasoning tasks---validating the
	empirical basis of~(H4) and demonstrating that the bottleneck
	is consistent across providers rather than an artifact of any
	single model.
	
	\BPC\ answers the question this paper set out to answer: sustained,
	per-identity participation can be engineered into a security
	resource, by verifying it window by window so that no adversary can
	amortize it. \BPC\ does not rely on a claim that a particular
	resource class is intrinsically expensive to sustain---reusability
	and parallelizability alone do not settle that question~\cite{maleki2026geometry}.
	Instead, its structural properties (identity binding, freshness,
	real-time response, and bounded per-channel throughput) make fresh,
	per-window verification of participation unavoidable, independent of
	any economic assumption about a particular resource class. \BPC\ is
	deliberately kept general and minimal: it verifies each window's
	participation bit but does not itself accumulate those bits into a
	score, reputation, or any other policy, leaving that decision---and
	the domain it is applied in---to the systems that consume it.
	Different domains---consensus, governance, reputation, or any open
	system that can pose its own qualifying action---may aggregate the
	same per-window bits differently, or not at all.
	
	The most immediate priorities for future work are a large-scale
	empirical study (N~$\geq$~100, controlled recruitment via a
	platform such as Prolific, direct statistical comparison against the
	automated results reported here); a theoretical treatment of
	difficulty-calibrated challenge design for human channels---
	parameterizing challenge families so that human solve rate can be
	tuned continuously by design rather than only measured post hoc,
	analogous to a difficulty parameter in other proof-of-work-style
	primitives; privacy-preserving challenge delivery with cross-window
	unlinkability; formal composability analysis in the UC framework;
	and adaptive challenge rotation mechanisms that preserve~(H1)--(H4)
	under evolving automation capabilities.

	\section*{Ethics Considerations}
	
	This paper does not involve human subjects research, personal data
	collection, or vulnerability disclosure. The empirical evaluation in
	Section~\ref{sec:eval} uses automated API calls to three commercial
	vision-language models (GPT-4o, Gemini~2.5~Flash, Claude~Sonnet~4.5)
	on synthetically generated challenges; no human participants were
	involved in the automated-solver evaluation, and no personally
	identifiable information was collected or processed. The deployed web
	application described in Section~\ref{sec:eval-methodology} (link
	withheld for anonymous review; see the anonymized repository) is
	intended for future human-subject evaluation under appropriate
	institutional oversight.
	
	\section*{Generative AI Usage Considerations}
	
	Generative AI was used for editorial purposes in preparing this
	manuscript, including assistance with phrasing, consistency checking,
	and structural revision of draft sections. All technical content,
	formal definitions, proofs, experimental design, and empirical
	results were developed, executed, and verified by the authors. All
	outputs were inspected by the authors to ensure accuracy and
	originality. No AI-generated ideas or results appear in this paper
	without independent development and validation by the authors.
	
	\bibliographystyle{IEEEtran}
	\bibliography{references}

\end{document}